\newtheorem{theorem}{Theorem}[section]
\newtheorem{proposition}[theorem]{Proposition}
\newtheorem{lemma}[theorem]{Lemma}
\newtheorem{corollary}[theorem]{Corollary}
\theoremstyle{definition}
\newtheorem{defn}[theorem]{Definition}
\theoremstyle{remark}
\newcommand{\BE}{\mathbb E}
\newcommand{\BP}{\mathbb P}
\newcommand{\BR}{\mathbb R}
\newcommand{\eps}{\varepsilon}
\newcommand{\tO}{\tilde{O}}
\title{All-Pairs Shortest Path Distances with Differential Privacy: Improved Algorithms for Bounded and Unbounded Weights} 
\author{Justin Y.\ Chen\thanks{MIT. justc@mit.edu. Supported by an NSF Graduate Research Fellowship, MathWorks Engineering Fellowship, and Simons Investigator Award.}~, Shyam Narayanan\thanks{MIT. shyamsn@mit.edu. Supported by an NSF Graduate Research Fellowship and a Simons Investigator Award.}~, Yinzhan Xu\thanks{MIT. xyzhan@mit.edu. Supported by NSF Grant CCF-1528078.}}
\date{}
\begin{document}

\maketitle

\begin{abstract}
    We revisit the problem of privately releasing the all-pairs shortest path distances of a weighted undirected graph up to low additive error, which was first studied by Sealfon~\cite{sealfon2016dpapsp}. In this paper, we improve significantly on Sealfon's results, both for arbitrary weighted graphs and for bounded-weight graphs on $n$ nodes. Specifically, we provide an approximate-DP algorithm that outputs all-pairs shortest path distances up to maximum additive error $\tilde{O}(\sqrt{n})$, and a pure-DP algorithm that outputs all pairs shortest path distances up to maximum additive error $\tilde{O}(n^{2/3})$ (where we ignore dependencies on $\varepsilon, \delta$). This improves over the previous best result of $\tilde{O}(n)$ additive error for both approximate-DP and pure-DP~\cite{sealfon2016dpapsp}, and partially resolves an open question posed by Sealfon~\cite{sealfon2016dpapsp, DPorg-open-problem-all-pairs}. We also show that if the graph is promised to have reasonably bounded weights, one can improve the error further to roughly $n^{\sqrt{2}-1+o(1)}$ in the approximate-DP setting and roughly $n^{(\sqrt{17}-3)/2 + o(1)}$ in the pure-DP setting. Previously, it was only known how to obtain $\tilde{O}(n^{1/2})$ additive error in the approximate-DP setting and $\tilde{O}(n^{2/3})$ additive error in the pure-DP setting for bounded-weight graphs~\cite{sealfon2016dpapsp}. 
\end{abstract}

\section{Introduction} \label{sec:intro}

With the advent of massive data analysis over the past few decades, one important concern has been that the data may reveal highly sensitive information about the users contributing the data. Hence, a major challenge that has arisen in data analysis is to protect the \emph{privacy} of the users. The notion of privacy that we focus on is that of \emph{differential privacy} (DP), first developed by Dwork et al.~\cite{dwork2006dp}. Differential privacy has recently emerged as the primary method of ensuring privacy both in theory and practice, and has been employed by various companies including Apple~\cite{Apple}, Google~\cite{Google}, and Microsoft~\cite{Microsoft}, as well as the US Census Bureau~\cite{Census}.

In most settings, differential privacy is utilized for algorithms that operate on a dataset $X$ of points $\{x_1, \dots, x_n\}$, where each $x_i$ represents the data of some user. Informally, an algorithm $\mathcal{A}$ operating on this dataset $X$ is said to be differentially private if the output of the algorithm does not change significantly if a single point is arbitrarily altered. The reason why this models privacy is that one can learn very little about an individual point $x_i$ from the output $\mathcal{A}(X)$, as the point $x_i$ could have been significantly altered but the output may be the same. More formally, differential privacy is defined as follows.

\begin{defn}[\cite{dwork2006dp}]
A (randomized) algorithm $\mathcal{A}$ is said to be $(\eps, \delta)$-\emph{differentially private} ($(\eps, \delta)$-DP for short) if for any two ``adjacent'' datasets $X$ and $X'$ and any subset $S$ of the output space of $\mathcal{A}$, we have
\[\BP(\mathcal{A}(X) \in S) \le e^{\eps} \cdot \BP(\mathcal{A}(X') \in S) + \delta.\]
When $\delta > 0$, this is often referred to as \emph{approximate} differential privacy (approximate-DP), as opposed to \emph{pure} differential privacy  (pure-DP) when $\delta = 0$.
\end{defn}

The definition notably requires a notion of \emph{adjacent} datasets. Normally, we say that two datasets $X = \{x_1, \dots, x_n\}, X' = \{x_1', \dots, x_n'\}$ are \emph{adjacent} if there is at most one $i$ such that $x_i \neq x_i'$, i.e., only a single point changes from $X$ to $X'$. However, in this paper, we do not focus on databases of the form $\{x_1, \dots, x_n\}$, but instead focus on private algorithms for undirected \emph{graphs}. While there have been several models designed for private graph analysis, we focus on a model first developed by Sealfon~\cite{sealfon2016dpapsp}. In this setting, we view the database as a weighted graph $H = (V, E, W)$, where $V$ represents the vertices, $E$ represents the edges, and $W \in (\BR^+)^{E}$ represents the weights on the edges. We view the underlying graph topology $G = (V, E)$, i.e., the unweighted version of the graph, as public, and we define two graphs $H, H'$ with the same underlying graph topology to be \emph{adjacent} if their edge weights differ in $\ell_1$ norm by at most $1$ when viewed as vectors in $(\BR^+)^{E}$.

While the notion of adjacency and the fact that the underlying graph is public may seem unusual, this model is natural in certain applications. In particular, Sealfon~\cite{sealfon2016dpapsp} notes that this model is well-suited for the setting of traffic data. In this case, the road network, which can be thought of as the underlying topology, is public to all, and the weights can be thought of as the amount of traffic in each individual road segment. Navigation tools such as Google Maps can estimate traffic based on GPS locations of vehicles or cell phone data, so we want to make sure that these navigation tools do not compromise the privacy of an individual vehicle or cell phone. 
Hence, the definition we use for adjacent databases is natural in this setting, as it captures the possibilities of adding or removing a single vehicle, or moving a single vehicle to a different location.

Sealfon~\cite{sealfon2016dpapsp} primarily studies the problems of private all-pairs shortest paths and private all-pairs shortest path distances on such graphs. In all-pairs shortest paths, the goal is to output the shortest path from every node $A$ to every node $B$ along a weighted graph. Indeed, a navigation app should be able to accurately estimate all such paths. In all-pairs shortest path distances, the goal is just to output the \emph{lengths} of all shortest paths.

\subsection{Prior Work and Our Results}

To describe the problem of private all-pairs shortest path distances more formally, our goal is to design an $(\eps, \delta)$-DP algorithm that operates on a weighted undirected graph $H = (V, E, W)$, and outputs a matrix $\tilde{D} \in \BR^{V \times V}$ such that $\max_{u, v \in V} |\tilde{D}(u, v) - d(u, v)|$ is small. Here, $d(u, v)$ is the true  distance between $u$ and $v$ in the graph $H$: our goal is to estimate this quantity up to small \emph{additive error} for all $u, v \in V$ in a private manner. One can also ask the similar problem of all-pairs shortest paths: here, the goal is to design an $(\eps, \delta)$-DP algorithm that operates on $H$ and outputs a path $\mathcal{P}[u, v]$ for every $u, v \in V$ such that every path is close in length to the true shortest path up to a small additive error. 

Sealfon showed that for arbitrary graphs $H = (V, E, W)$ with $|V| = n$ nodes and possibly unbounded weights, there exists an $(\eps, 0)$-DP algorithm that estimates both all-pairs shortest paths and all-pairs shortest path distances up to additive error $O(n \log n/\eps)$, i.e., every shortest path (or distance) is estimated correctly up to this additive error. He also showed that for the all-pairs shortest paths problem, one cannot obtain better than $\Omega(n)$ error, even if $\eps, \delta = \Omega(1)$ and all edge weights are bounded by some constant.

However, this does not prevent one from being able to output the \emph{distances} with smaller error. Indeed, Sealfon~\cite{sealfon2016dpapsp} showed that if all weights in $H$ are promised to be at most some value $A$, there exists an $(\eps, \delta)$-DP algorithm that estimates all-pairs shortest path distances up to additive error $\tilde{O}\left(\sqrt{n \cdot A \cdot \eps^{-1} \log \delta^{-1}}\right)$\footnote{$\tilde{O}$ hides polylog factors in $n$.}, which for small $A$ is only approximately $n^{1/2}$. Similarly, he also showed an $(\eps, 0)$-DP algorithm with additive error $\tilde{O}\left((n \cdot A)^{2/3} \eps^{-1/3}\right)$.
He also showed that for the special case where the underlying graph topology $G$ is a tree, one can estimate all-pairs shortest path distances with $(\eps, 0)$-DP up to additive error $O(\log^{2.5} n /\eps)$.
For the general case of an arbitrary graph topology and arbitrary weights, it was previously unknown how to obtain error \emph{sublinear} in $n$ for all-pairs shortest path distances. Indeed, Sealfon~\cite{sealfon2016dpapsp} poses this question as an open problem, which also later appeared on \url{differentialprivacy.org}~\cite{DPorg-open-problem-all-pairs}.

\medskip

In this paper, we improve significantly on Sealfon's results for all-pairs shortest path distances, answering this open question to a significant degree.
Specifically, we show an $(\eps, \delta)$-DP mechanism for any $0 < \eps, \delta < 1$ that, with at least $2/3$ probability, solves the all-pairs shortest path distances problem with additive error $\tilde{O}(n^{1/2} \cdot \eps^{-1} \sqrt{\log \delta^{-1}})$. In addition, we show an $(\eps, 0)$-DP mechanism for any $0 < \eps < 1$ that, with at least $2/3$ probability, solves the all-pairs shortest path distances problem with additive error $\tilde{O}(n^{2/3} \cdot \eps^{-1})$.

Furthermore, we also tackle the bounded-weight case, for which, as mentioned above, there was previously a $\tilde{O}\left(\sqrt{n \cdot A \cdot \eps^{-1} \log \delta^{-1}}\right)$-error algorithm in the approximate-DP setting and a $\tilde{O}\left((n \cdot A)^{2/3} \eps^{-1/3}\right)$-error algorithm in the pure-DP setting if all weights are in the range $(0, A]$. We provide an improved $(\eps, \delta)$-DP algorithm for this as well, which has additive error $n^{\sqrt{2}-1+o(1)} \cdot \sqrt{\log \delta^{-1}} \cdot (A^{\sqrt{2}-1} \cdot \eps^{-(2-\sqrt{2})} + \eps^{-1})$, as well as an improved $(\eps, 0)$-DP algorithm, which has additive error $n^{(\sqrt{17}-3)/2+o(1)} \cdot (A^{(\sqrt{17}-3)/2} \cdot \eps^{-(5-\sqrt{17})/2} + \eps^{-1}).$
When the bound $A$ and $\eps^{-1}, \log \delta^{-1}$ are sufficiently small (say, $n^{o(1)}$), the additive error is $n^{\sqrt{2}-1+o(1)} \le O(n^{0.4143})$ in the approximate-DP setting and $n^{(\sqrt{17}-3)/2 + o(1)} \le O(n^{0.5616})$ in the pure-DP setting. These improve over the previous best bounds of $\tilde{O}(n^{1/2})$ and $\tilde{O}(n^{2/3})$, respectively.

\subsection{Related Work}

\paragraph{Private Graph Algorithms}

The concept of differential privacy was first developed by Dwork et al.~\cite{dwork2006dp}, and differential privacy was first applied to graph data analysis a few years later by Hay et al.~\cite{hay2009graphdp}. This paper, however, focused on \emph{edge} and \emph{node} differential privacy. In these models, the underlying graph topology is supposed to be private, with adjacent data sets being represented by adding or removing a single edge or vertex, respectively. Sealfon's work~\cite{sealfon2016dpapsp} was the first to consider the model where adjacent datasets differ only in edge weights, which is the setting that we study. Since then, related models have been studied, e.g., in \cite{brunet2016privategraph, pinot2018mst, pinot2018graphclustering}.

\paragraph{All-Pairs Shortest Paths} 
All-pairs shortest paths is an important problem in graph algorithms.
Since the classic Floyd-Warshall algorithm that runs in $O(n^3)$ time for all-pairs shortest paths in weighted graphs, there have been many efforts in improving its running time (e.g.~\cite{fredman1976new, Chan10, HanTakaoka}). The current best algorithm by Williams~\cite{Williamsapsp} runs in $n^3 / \exp(\Omega(\sqrt{\log n}))$ time, which is still near cubic. 

For unweighted graphs, there is a difference between directed graphs and undirected graphs. While all-pairs shortest paths on undirected unweighted graphs can be solved in $\tilde{O}(n^\omega)$ time~\cite{seidel1995, GalilM97distance, GalilM97path} where $\omega < 2.373$~\cite{AlmanV} denotes the matrix multiplication exponent, the current best algorithm for all-pairs shortest paths on directed unweighted graphs by Zwick~\cite{Zwick02} runs in $O(n^{2.529})$ time using the best algorithm for rectangular matrix multiplication \cite{legallurr}.

While our focus is on privacy rather than runtime for all-pairs shortest path distances, we note that all of our algorithms run in polynomial time.

\paragraph{Independent work by Ghazi et al.}

Recently, an independent work by Ghazi et al.~\cite{GhaziScooped} appeared online, which also studied the problem of private all-pairs shortest path distances. They provide similar algorithms in the unbounded weight case, with roughly $n^{1/2}$ error in the $(\eps, \delta)$-DP case and roughly $n^{2/3}$ error in the $(\eps, 0)$-DP case (where we ignore factors of $\eps, \log \delta^{-1}$ for simplicity).

In addition, they also prove a lower bound of $n^{1/6}$-additive error in this case for any sufficiently small $\eps, \delta$. They also consider a multiplicative version of this problem, showing that for any integer $k \ge 2$, there is an $(\eps, \delta)$-DP (resp., $(\eps, 0)$-DP) algorithm that provides a $(2k-1)$-multiplicative approximation with a smaller $\tilde{O}(n^{(k+1)/(3k+1)})$ (resp., $\tilde{O}(n^{(k+1)/(2k+1)})$) additive error. While we do not prove results of this form, our improved approximations when the edge weights are bounded are novel even in light of their paper.

\subsection{Roadmap}

In Section \ref{sec:prelim}, we describe some notation, as well as some important preliminary results relating to differential privacy and martingales.
In Section \ref{sec:overview}, we provide an overview of both of our algorithms.
In Section \ref{sec:unbounded}, we describe and analyze our first algorithm when the weights of $H$ are possibly unbounded.
Finally, in Section \ref{sec:bounded}, we describe and analyze our second algorithm, which provides improved guarantees when the weights of $H$ are bounded by some $A$.

\section{Preliminaries} \label{sec:prelim}
Following previous work~\cite{sealfon2016dpapsp}, we only consider undirected graphs in this paper. However, we remark that our algorithm for graphs with unbounded weights also works for directed graphs. 

For any two nodes $u, v$, let $d(u, v)$ be the shortest path distance between the nodes under the edge weights and let $h(u, v)$ be the hop distance between the nodes: the shortest path distance from $u$ to $v$ under the unweighted version of $G$. If dealing with a subgraph $G' \subset G$, we use $h_{G'}(u, v)$ to denote the hop distance between $u$ and $v$ in $G'$.

Let $B_G(v, r)$ denote the ball of  radius $r$ around $v$ in $G$, i.e., $B_G(v, r) := \{u \in V: h(u, v) \le r\}$. (We may abbreviate this as $B(v, r)$ when the graph $G$ is clear.) If dealing with a subgraph $G' \subset G$, we use $B_{G'}(v, r)$ to denote $\{u \in V(G'): h_{G'}(u, v) \le r\}$.

We use $\log(x)$ to denote the natural logarithm of $x$.

Next, we describe some preliminaries related to differential privacy. First, we remark that in all of our algorithms, we implicitly assume that $\eps \le 1$ and $\delta \le \frac{1}{10}$. We next describe the Laplace mechanism. Given a vector $v \in \BR^m$, the Laplace mechanism works by replacing each coordinate $v_i$ with $v_i + Lap(T)$ for some choice of $T > 0$, where $Lap(T)$ is the Laplace distribution with PDF $\frac{1}{2T} \cdot e^{-|x|/T}$ with respect to $x$. It is well-known (see, for instance, \cite{dworkrothbook} or \cite{vadhan2017textbook}) that for any two vectors $v, v'$, the output distributions $\mathcal{D}, \mathcal{D}'$ created by applying the Laplace mechanism with parameter $T$ on $v, v'$ satisfy the following property. For any subset $S \subset \BR^m$, $\BP(\mathcal{D} \in S) \le e^{\|v-v'\|_1/T} \cdot \BP(\mathcal{D}' \in S)$. As a direct result (noted by \cite{sealfon2016dpapsp}), we have that outputting the set of weights of $H$, where each edge weight has $Lap(T)$ noise added to it, is $(1/T, 0)$-DP, since adjacent datasets differ in $\ell_1$ norm by at most $1$. Likewise, if we defined $H, H'$ to be adjacent if $\|H-H'\|_1 \le \Delta$ for some parameter $\Delta > 0$, then this mechanism would be $(\Delta/T, 0)$-DP.

In addition, one can apply the Laplace mechanism to some function $f: H \to \BR$. We say that a function $f$ has \emph{sensitivity} $\Delta$ if for any two adjacent graphs $H, H'$, $\|f(H)-f(H')\|_1 \le \Delta$. Then, applying the Laplace Mechanism by outputting $f(H) + Lap(T)$ is known to be $(\Delta/T, 0)$-DP.

Next, we note the following two theorems regarding the privacy of composing private mechanisms (see, for instance, \cite{dworkrothbook} or \cite{vadhan2017textbook}).

\begin{theorem}[Basic Composition]
    Let $\mathcal{A}_1, \dots, \mathcal{A}_k$ be mechanisms on a dataset $H$ such that each $\mathcal{A}_i$ is $(\eps_i, \delta_i)$-differentially private. Then, the mechanism $\mathcal{A}$ which concatenates the outputs of $\mathcal{A}_1, \dots, \mathcal{A}_k$ is $(\sum \eps_i, \sum \delta_i)$-differentially private.
\end{theorem}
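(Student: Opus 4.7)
The plan is to induct on $k$; the base case $k=1$ is immediate from the definition. For the inductive step, it suffices to show the two-mechanism version: if $\mathcal{B}$ is $(\eps_B, \delta_B)$-DP and $\mathcal{C}$ is $(\eps_C, \delta_C)$-DP (with independent internal randomness), then $\mathcal{D}(H) := (\mathcal{B}(H), \mathcal{C}(H))$ is $(\eps_B + \eps_C, \delta_B + \delta_C)$-DP. Applying this once to $\mathcal{A}_1$ paired with the inductively-obtained composition of $\mathcal{A}_2, \ldots, \mathcal{A}_k$ then yields the theorem.

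For the two-mechanism step, fix adjacent inputs $H, H'$ and a measurable subset $S$ of the joint output space, and write $S_b := \{c : (b, c) \in S\}$ for each $b$ in the range of $\mathcal{B}$. Conditioning on the output of $\mathcal{B}$ gives
\[
\BP[\mathcal{D}(H) \in S] = \BE_{b \sim \mathcal{B}(H)}\bigl[\BP[\mathcal{C}(H) \in S_b]\bigr].
\]
The $(\eps_C, \delta_C)$-DP guarantee of $\mathcal{C}$ applied to each slice gives $\BP[\mathcal{C}(H) \in S_b] \le e^{\eps_C} \BP[\mathcal{C}(H') \in S_b] + \delta_C$. Setting $g(b) := \BP[\mathcal{C}(H') \in S_b] \in [0,1]$, I then bound the outer expectation via a layer-cake argument: writing $g(b) = \int_0^1 \mathbf{1}[g(b) \ge t]\,dt$ and applying $(\eps_B, \delta_B)$-DP of $\mathcal{B}$ to each level set $\{b : g(b) \ge t\}$, then integrating over $t \in [0,1]$, yields $\BE_{b \sim \mathcal{B}(H)}[g(b)] \le e^{\eps_B} \BE_{b \sim \mathcal{B}(H')}[g(b)] + \delta_B$. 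Chaining this with the previous bound produces the multiplicative factor $e^{\eps_B + \eps_C}$ that the theorem demands.

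The main subtlety is in the $\delta$ term: naive chaining produces an extra factor of $e^{\eps_C}$ multiplying $\delta_B$, giving only $(\eps_B + \eps_C,\, e^{\eps_C}\delta_B + \delta_C)$-DP. To obtain the sharper bound $\delta_B + \delta_C$ stated in the theorem, I would invoke the standard ``privacy-loss random variable'' characterisation of approximate DP: an $(\eps, \delta)$-DP output distribution admits a decomposition into a ``good'' piece that is pointwise within ratio $e^\eps$ of its neighbour's distribution, together with a ``bad'' piece of mass at most $\delta$. Restricting the outer expectation to the good piece before applying the $\mathcal{C}$-bound avoids the spurious $e^{\eps_C}$ factor, while the two bad pieces contribute only additively and union-bound to $\delta_B + \delta_C$. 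This decomposition step is the main non-routine ingredient; the induction, conditioning, and layer-cake pieces are all direct calculation.
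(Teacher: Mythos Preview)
The paper does not actually prove this theorem; it is stated in the Preliminaries section as a standard result with a citation to \cite{dworkrothbook} and \cite{vadhan2017textbook}, and no proof is given. So there is no ``paper's own proof'' to compare against.

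Your sketch is correct and is essentially the standard textbook argument. The induction and the conditioning/layer-cake computation are fine, and you correctly identify the one genuine subtlety: naive chaining produces $e^{\eps_C}\delta_B + \delta_C$ rather than $\delta_B + \delta_C$, and the fix is exactly the good/bad decomposition you describe (often phrased as: any $(\eps,\delta)$-DP mechanism can be written as a mixture of an $(\eps,0)$-DP mechanism with weight $1-\delta$ and an arbitrary mechanism with weight $\delta$). That is precisely how the sharp additive bound on $\delta$ is obtained in the standard proofs.
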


\begin{theorem}[Strong Composition]
    Let $\mathcal{A}_1, \dots, \mathcal{A}_k$ be mechanisms on a dataset $H$ such that each $\mathcal{A}_i$ is $(\eps, \delta)$-differentially private. Then, for any $\delta' > 0$, the mechanism $\mathcal{A}$ which concatenates the outputs of $\mathcal{A}_1, \dots, \mathcal{A}_k$ is $(\sqrt{2k \log \delta^{-1}} \cdot \eps + k \eps (e^\eps - 1), k \delta + \delta')$-differentially private.
\end{theorem}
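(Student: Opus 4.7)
The plan is to prove strong composition by analyzing the \emph{privacy loss random variable} and applying a martingale concentration bound. Write an arbitrary pair of adjacent inputs as $H, H'$ and denote the (possibly adaptive) output of $\mathcal{A}_i$ on $H$ given previous outputs $o_1,\dots,o_{i-1}$ by a conditional distribution. For any output sequence $o = (o_1,\dots,o_k)$, define the privacy loss $L_i(o) = \log \frac{\Pr[\mathcal{A}_i(H)=o_i \mid o_{<i}]}{\Pr[\mathcal{A}_i(H')=o_i \mid o_{<i}]}$, and note that by basic composition the total loss $L(o) = \sum_{i=1}^k L_i(o)$ equals the privacy loss of the composed mechanism $\mathcal{A}$. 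The goal is to show that $\Pr_{o \sim \mathcal{A}(H)}[L(o) > \eps'] \le \delta''$ for the claimed $\eps' = \sqrt{2k\log\delta'^{-1}}\,\eps + k\eps(e^\eps-1)$ and $\delta'' = k\delta + \delta'$, which (by a standard lemma) implies $(\eps',\delta'')$-DP.

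The first step is to reduce to the pure-DP case. I would invoke the standard fact that an $(\eps,\delta)$-DP mechanism can be written as a mixture that is $\eps$-DP except on an event of probability $\delta$; equivalently, there exists an event $E_i$ (over the output of $\mathcal{A}_i$) of probability at most $\delta$ under both $H$ and $H'$ such that conditioned on $\lnot E_i$, the per-step privacy loss satisfies $|L_i(o)| \le \eps$ pointwise. Union-bounding over $i=1,\dots,k$ contributes $k\delta$ to the failure probability, which matches the $k\delta$ term of the theorem; the remaining task is to show that, conditioned on all the $\lnot E_i$ events, $L(o) \le \sqrt{2k\log\delta'^{-1}}\cdot \eps + k\eps(e^\eps-1)$ except with probability $\delta'$.

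Next, I would establish two ingredients about the $L_i$'s under the measure $o \sim \mathcal{A}(H)$: (i) a bounded-range bound $|L_i| \le \eps$ (from the previous step), and (ii) an expected-loss bound $\mathbb{E}[L_i \mid o_{<i}] \le \eps(e^\eps-1)$. The second ingredient is the key lemma: for any two distributions $P,Q$ with $e^{-\eps} \le P(x)/Q(x) \le e^{\eps}$ pointwise, the KL divergence $\mathrm{KL}(P\Vert Q) \le \eps(e^\eps-1)$. I would prove this by a short calculation using $\log t \le t - 1$ on $(P(x)-Q(x))\log(P(x)/Q(x))$ or the equivalent $\sum_x (P(x)-Q(x))(\log P(x) - \log Q(x))$ manipulation combined with $|P/Q - 1| \le e^\eps - 1$.

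Finally I would set up the martingale. Let $Z_i = L_i - \mathbb{E}[L_i \mid o_{<i}]$; under $o \sim \mathcal{A}(H)$, the sequence $Z_1,\dots,Z_k$ is a martingale difference sequence with $|Z_i| \le 2\eps$ (actually the sharper bound $|Z_i|\le \eps$ can be arranged, since $L_i \in [-\eps,\eps]$ and its mean is nonnegative and at most $\eps(e^\eps-1)\le \eps$), so Azuma--Hoeffding gives $\Pr[\sum Z_i > \sqrt{2k\log\delta'^{-1}}\cdot\eps] \le \delta'$. Combined with $\sum \mathbb{E}[L_i\mid o_{<i}] \le k\eps(e^\eps-1)$ and the union bound over the bad events $E_i$, this yields the theorem. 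The main technical subtlety I expect is the reduction from approximate-DP to the ``$\eps$-DP outside a $\delta$-event'' statement and getting the constant in the bounded range of $Z_i$ right so that Azuma gives exactly $\sqrt{2k\log\delta'^{-1}}\cdot\eps$ rather than a constant times this; I would handle this by using the tighter one-sided bound on $L_i$ rather than the symmetric range.
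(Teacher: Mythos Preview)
The paper does not prove this theorem; it is stated as a preliminary result and attributed to the references \cite{dworkrothbook} and \cite{vadhan2017textbook}. Your outline is exactly the standard Dwork--Rothblum--Vadhan argument that those references contain: track the per-step privacy loss $L_i$, use the KL bound $\mathrm{KL}(P\Vert Q)\le \eps(e^\eps-1)$ for $\eps$-indistinguishable distributions to control the drift, and apply Azuma--Hoeffding to the centered martingale to get the $\sqrt{2k\log(1/\delta')}\cdot\eps$ fluctuation term, with the $k\delta$ arising from the reduction of each $(\eps,\delta)$-DP step to $(\eps,0)$-DP outside a $\delta$-probability event. So there is nothing to compare against in the paper itself, and your approach is the canonical one.

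One small caution: the reduction step you describe (``there exists an event $E_i$ of probability at most $\delta$ under both $H$ and $H'$ such that conditioned on $\lnot E_i$ the per-step loss is bounded by $\eps$'') is a bit delicate to make precise as stated, because conditioning changes the distributions and hence the loss. The clean way, as in the references, is to invoke the characterization that any $(\eps,\delta)$-DP mechanism can be simulated by a mechanism whose privacy loss random variable is bounded by $\eps$ in absolute value except on an event of mass at most $\delta$ (equivalently, work with the ``dense subset'' formulation or the coupling lemma). Once that is in place, the Hoeffding-lemma form of Azuma with range $2\eps$ for each $Z_i$ gives exactly $\exp\bigl(-t^2/(2k\eps^2)\bigr)$, yielding the stated constant without slack.
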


Finally, we will need Doob's optional stopping theorem, a result on martingales (we will in fact apply it to supermartingales). First, we recall the definition of a martingale and a supermartingale.

\begin{defn} \label{def:martingale}
    Let $W_0, W_1, W_2, \dots$ be a sequence of random variables, and let $Z_0, Z_1, Z_2, \dots$ be a sequence of real-valued random variables, where each $Z_i$ only depends on $W_0, \dots, W_i$. We say that $\{Z_i\}$ is a \emph{martingale} with respect to $\{W_i\}$ if for all $n \ge 1$, $\BE[Z_n|W_0, \dots, W_{n-1}] = Z_{n-1}.$ Likewise, we say that $\{Z_i\}$ is a \emph{supermartingale} with respect to $\{W_i\}$ if for all $n \ge 1$, $\BE[Z_n|W_0, \dots, W_{n-1}] \le Z_{n-1}.$
\end{defn}

Next, we recall the definition of a stopping time for a (super)martingale.

\begin{defn} \label{def:stopping_time}
    Let $\{W_i\}$ and $\{Z_i\}$ be as in Definition \ref{def:martingale}. We define a random variable $\tau$ with support on the nonnegative integers, to be a \emph{stopping time} if for all integers $t \ge 0$, the event that $\tau \le t$ is a deterministic function $W_1, \dots, W_t$. Importantly, the stopping time cannot look at future information such as $W_{t'}$ for $t' > t$ to decide when to stop.
\end{defn}

Now, we state (one version of) Doob's optional stopping theorem for supermartingales.

\begin{theorem}[Optional Stopping Theorem~\cite{probability_book}]
    Let $\{Z_i\}$ be a supermartingale with respect to $\{W_i\}$. Suppose that $\tau$ is a stopping time that is uniformly bounded with probability $1$. Then, $\BE[Z_\tau] \le \BE[Z_0]$.
\end{theorem}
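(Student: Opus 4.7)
The plan is the standard stopped-process telescoping argument, exploiting the fact that $\tau$ is uniformly bounded. Say $\tau \le N$ almost surely for some deterministic integer $N$. Then I would start from the finite pointwise telescoping identity
\[
Z_\tau \;=\; Z_0 \;+\; \sum_{i=1}^{N} \mathbf{1}[\tau \ge i]\,(Z_i - Z_{i-1}),
\]
which holds because the summand vanishes for $i > \tau$ and the remaining $\tau$ differences telescope to $Z_\tau - Z_0$. Taking expectations, it suffices to show $\BE\bigl[\mathbf{1}[\tau \ge i](Z_i - Z_{i-1})\bigr] \le 0$ for each $i \in \{1, \dots, N\}$.

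The key structural observation is that, by Definition \ref{def:stopping_time}, the event $\{\tau \ge i\}$ is the complement of $\{\tau \le i-1\}$, which is a deterministic function of $W_0, \dots, W_{i-1}$. Hence $\mathbf{1}[\tau \ge i]$ is measurable with respect to the information available through step $i-1$. Pulling this indicator out of the conditional expectation and using the tower property,
\[
\BE\bigl[\mathbf{1}[\tau \ge i](Z_i - Z_{i-1})\bigr] \;=\; \BE\Bigl[\mathbf{1}[\tau \ge i]\cdot \BE[Z_i - Z_{i-1} \mid W_0, \dots, W_{i-1}]\Bigr].
\]
By the supermartingale property in Definition \ref{def:martingale}, the inner conditional expectation is $\le 0$, and the indicator is nonnegative, so the product has nonpositive expectation. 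Summing over $i$ from $1$ to $N$ and adding $\BE[Z_0]$ gives $\BE[Z_\tau] \le \BE[Z_0]$, as required.

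The only subtle point to verify is integrability, so that these manipulations of conditional expectation are justified. Because $\tau \le N$ with probability $1$, we have $|Z_\tau| \le \sum_{i=0}^{N} |Z_i|$, a sum over a deterministic finite set of indices, so the integrability of each individual $Z_i$ (implicit in calling $\{Z_i\}$ a supermartingale) suffices to make each term in the argument well-defined. An equivalent and perhaps cleaner packaging would be to define the stopped process $\tilde{Z}_i := Z_{\min(i,\tau)}$, verify by a one-step computation using $\{\tau \ge i\} \in \sigma(W_0, \dots, W_{i-1})$ that $\{\tilde{Z}_i\}$ is itself a supermartingale, and then conclude $\BE[Z_\tau] = \BE[\tilde{Z}_N] \le \BE[\tilde{Z}_0] = \BE[Z_0]$. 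I do not anticipate any real obstacle: the uniform bound on $\tau$ removes the usual difficulties (dominated convergence, uniform integrability) that complicate the theorem for unbounded stopping times, and the proof reduces to careful bookkeeping of what is measurable with respect to $W_0, \dots, W_{i-1}$.
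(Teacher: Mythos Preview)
Your argument is correct and is exactly the standard proof of the optional stopping theorem for uniformly bounded stopping times. Note, however, that the paper does not actually prove this statement: it is quoted as a known result with a citation to a probability textbook, so there is no in-paper proof to compare against. Your telescoping-plus-measurability argument (and the equivalent stopped-process formulation you sketch) is precisely what one finds in standard references, and the integrability remark is the right justification for why the bounded-$\tau$ case avoids the usual subtleties.
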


\section{Overview} \label{sec:overview}
\subsection{Unbounded Weights}

Our algorithm for graphs with unbounded weights uses the ``hitting set'' idea, which was also used in Zwick's algorithm for directed unweighted all-pairs shortest paths~\cite{Zwick02}. Namely, if we fix an arbitrary shortest path $\mathcal{P}[u,v]$ between any pair of vertices $u, v$, then a random subset of $\tilde{O}(n^{\alpha})$ nodes will likely intersect all shortest paths $\mathcal{P}[u, v]$ that have at least $n^{1-\alpha}$ hops. 

We describe the high-level ideas for our approximate-DP  algorithm for graphs with unbounded weights when $\epsilon, \delta = \Theta(1)$. Let $S \subset V$ be a random subset of $\tO(\sqrt{n})$ vertices. By strong composition, releasing all $\tilde{O}(n)$ pairwise distances of nodes in $S$ plus $Lap(\tO(\sqrt{n}))$ noise is $(\epsilon / 2, \delta)$-DP (by setting appropriate factors hidden in $\tO$). We also release all the edge weights, plus $Lap(\tO(1))$ noise, so that these edge weights with noises are $(\epsilon / 2, 0)$-DP. By basic composition, the overall outputs are $(\epsilon, \delta)$-DP.

We then show how to recover the distance from $u$ to $v$ for any two nodes $u, v$. Let $s_1$ be the first vertex on $\mathcal{P}[u, v]$ that is in $S$ and let $s_2$ be the last such vertex (we assume $s_1, s_2$ exist in this section). With high probability, the number of hops on $\mathcal{P}[u, v]$ between $u$ and $s_1$ is $\tO(\sqrt{n})$ and the number of hops on $\mathcal{P}[u, v]$ between $s_2$ and $t$ is $\tO(\sqrt{n})$. We can then approximate $d(u, s_1), d(s_1, s_2), d(s_2, v)$ separately. First, we can approximate $d(s_1, s_2)$ up to $\tO(\sqrt{n})$ error, since we released $d(s_1, s_2)$ plus $Lap(\tO(\sqrt{n}))$.
Since $d(u, s_1)$ and $d(s_2, v)$ correspond to paths with $\tO(\sqrt{n})$ edges, and each outputted edge has a $Lap(\tO(1))$ noise, we can approximate them within $\tO(\sqrt{n})$ additive error. Finally, 
since $s_1, s_2$ are on the true shortest path, $d(u, v) = d(u, s_1)+d(s_1, s_2)+d(s_2, v)$, and we can approximate $d(u, v)$ by trying all possible $s_1, s_2$.

\subsection{Bounded Weights}
As a key idea in our algorithm for graphs with unbounded weights resembles the ``hitting set'' idea in Zwick's algorithm \cite{Zwick02}, it is natural to seek algorithmic tools for other variants of all-pairs shortest paths in order to get better algorithms for the bounded weight case. 

A common strategy for algorithms in undirected unweighted graphs is to use hitting set in a different way. The high-level idea there is to classify nodes as high-degree nodes and low-degree nodes (some applications also have a third class of medium-degree nodes), and to use the hitting set to hit the neighborhood of every high-degree node. This strategy was used, for instance, in additive approximate all-pairs shortest paths in undirected unweighted graphs \cite{AingworthCIM99, DorHZ00}  and additive spanners \cite{AingworthCIM99,BaswanaKMP10,Chechik13}. Our algorithm will use this common strategy as one of the key ideas, but the hitting set will be used to hit a close neighborhood of the nodes, instead of direct neighbors. 

In the following, we will sketch a $\tO(n^{3/7})$-error approximate-DP algorithm for all-pairs shortest path distances for graphs with bounded weights. 
Let  $H=(V, E, W)$ be the input graph and let $G=(V, E)$ be the public, unweighted graph. For simplicity, assume that $\epsilon, \delta = \Theta(1)$ and that the edge weights are bounded by a constant. 
Let $\mathcal{A}$ be any  $\tO(n^{1/2})$-error $(\epsilon/2, \delta / 6n^2)$-DP algorithm for all-pairs shortest path distances for graphs with bounded weights (e.g., the one in \cite{sealfon2016dpapsp}). 

First, we sample a random set $S \subset V$ of size $\tO(n^{3/7})$. By strong composition, releasing  pairwise distances of nodes in $S$ plus $Lap(\tO(n^{3/7}))$ is $(\epsilon / 2, \delta / 2)$-DP. Then we perform the following peeling procedure, as long as the graph $G$ contains a vertex $v$ where $|B(v, C n^{3/7} \log n)| \le n^{4/7}$ for a sufficiently large constant $C$.
\begin{enumerate}
    \item Sample $r \sim Expo(n^{3/7})$, i.e., $r$ has an exponential distribution with mean $n^{3/7}$;
    \item Let $H'$ be a subgraph of $H$, where the node set of $H'$ is $B(v, r + 1)$ and the edge set of $H'$ is the set of all edges that are adjacent to $B(v, r)$ (note that with high probability, $r+1 \le C n^{3/7} \log n$, so $|V(H')| \le n^{4/7}$ and note that the vertex/edge structure of $H'$ can be computed from $G$ without access to the edge weights); 
    \item Run $\mathcal{A}$ on $H'$ and release its outputs;
    \item Remove all vertices in $B(v, r)$ and their adjacent edges from $H$. 
\end{enumerate}
Note that the output of $\mathcal{A}$ for each $H'$ is $(\epsilon/2, \delta / 6n^2)$-DP, and the edge set of all $H'$ are disjoint, so intuitively we should expect the combined outputs of all calls to $\mathcal{A}$ to have a good privacy guarantee. In fact, the combined outputs are $(\epsilon/2, \delta / 2)$-DP (as we will show in Section~\ref{subsec:bounded_privacy}). Overall, what we release is $(\epsilon, \delta)$-DP. 

Now we show how to recover the pairwise distances up to additive error $\tO(n^{3/7})$. For any $u, v \in V$, consider an arbitrary shortest path $\mathcal{P}$ from $u$ to $v$. Let $p_i$ be the first vertex on this path such that $S \cap B(p_i, C n^{3/7} \log n) \ne \emptyset$, and let $p_j$ be the last such vertex (we assume $p_i, p_j$ exist in this section). Let $s_1 \in S \cap B(p_i, C n^{3/7} \log n)$ and let $s_2 \in S \cap B(p_j, C n^{3/7} \log n)$. Note that $|d(u, v) - (d(u, p_i) + d(s_1, s_2) + d(p_j, v))| \le d(p_i, s_1) + d(p_j, s_2) = \tO(n^{3/7})$, and we have an $\tO(n^{3/7})$ error approximation of $d(s_1, s_2)$ since $s_1,s_2 \in S$. It remains to consider $d(u, p_i)$ (and $d(p_j, v)$ follows by symmetry). 

With high probability, the number of hops from $u$ to $p_i$ on $\mathcal{P}$ is $\tO(n^{4/7})$ since among the first $\tO(n^{4/7})$ vertices on $\mathcal{P}$, one of them belongs to $S$ with high probability. Also, with high probability, none of the vertices $x$ on $\mathcal{P}$ before $p_i$  can have $|B(x, Cn^{3/7} \log n)| > n^{4/7}$ (since otherwise $B(x, Cn^{3/7} \log n)$ will likely intersect $S$ and we would choose $x$ to be $p_i$), and consequently all edges between $u$ and $p_i$ on $\mathcal{P}$ will be removed in the peeling procedure. 

Recall in the peeling procedure, each time we pick $v$ and sample $r$ from $Expo(n^{3/7})$.  If a node $x$ on $\mathcal{P}$ before $p_i$ is  in $B(v, r)$, then we would expect to see all nodes $y$ around $x$ up to hop distance $n^{3/7}$ to belong to $B(v, r)$ as well. Therefore, intuitively, we should be able to decompose the subpath of $\mathcal{P}$ from $u$ to $p_i$ to $\tO(n^{4/7} / n^{3/7}) = \tO(n^{1/7})$ subpaths, and each subpath is contained completely inside one $H'$.
Since each $H'$ has $O(n^{4/7})$ nodes as mentioned, algorithm $\mathcal{A}$ can provide an $\tO(\sqrt{n^{4/7}}) = \tO(n^{2/7})$ additive approximation for all-pair shortest path distances inside each $G'$. Therefore, the overall additive error of $d(u, p_1)$ is bounded by $\tO(n^{1/7} \cdot n^{2/7}) = \tO(n^{3/7})$, as desired. 

In Section~\ref{sec:bounded}, we will combine a more refined version of the above intuition with recursion to 
get the approximate-DP algorithm with $n^{\sqrt{2}-1 + o(1)}$ additive error, as well as a pure-DP algorithm with $n^{(\sqrt{17}-3)/2 + o(1))}$ additive error (when $\eps, \delta = \Omega(1)$ and all edge weights are bounded by some constant).

\section{Algorithms for Unbounded Weights} \label{sec:unbounded}

In this section, we describe pure and approximate-DP algorithms for the all-pairs shortest path distances problem with for graphs with unbounded, positive edge weights.

\begin{algorithm}
\caption{}
\label{alg:unbounded}
\begin{algorithmic}[1]
\Procedure{UnboundedWeights}{$H = (V, E, W), \eps, \delta$}
\State For pure-DP, set $L \leftarrow O(n^{1/3}\log n)$ and $T \leftarrow O(n^{2/3}\eps^{-1} \log^2 n)$
\State For approximate-DP, set $L \leftarrow O(n^{1/2} \log n)$ and $T \leftarrow O\left(n^{1/2}\eps^{-1} \log n \sqrt{\log \frac{1}{\delta}}\right)$
\State Pick a random set of vertices $S \subset V$ of size $L$
\State For each pair $u, v \in S$, compute $d(u, v)$ and release the distance plus $Lap(T)$ noise
\State For each edge $e \in E$, release the edge weight $w_e$ plus $Lap(2/\eps)$ noise
\EndProcedure
\end{algorithmic}
\end{algorithm}

\begin{theorem}
\label{thm:unboundedpuredp}
In the pure-DP setting, Algorithm~\ref{alg:unbounded} is $(\eps, 0)$-DP.
\end{theorem}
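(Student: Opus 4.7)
The plan is to argue that the two releases in steps 5 and 6 are each $(\eps/2, 0)$-DP, and then invoke basic composition. The sampling of $S$ in step 4 is data-independent, so we can condition on any realization of $S$ without affecting the privacy analysis.

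For step 6, the vector of edge weights, viewed as a function $f : H \mapsto W \in \BR^{E}$, has $\ell_1$-sensitivity at most $1$ by the very definition of adjacent datasets $H, H'$ (whose weight vectors differ in $\ell_1$ norm by at most $1$). Hence, by the Laplace mechanism applied coordinatewise with parameter $2/\eps$, step 6 is $(\eps/2, 0)$-DP.

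For step 5, the key observation is to bound the $\ell_1$-sensitivity of the full vector $(d(u,v))_{\{u,v\} \subset S} \in \BR^{\binom{L}{2}}$. First, for any fixed pair $u, v$, the distance function satisfies $|d_H(u,v) - d_{H'}(u,v)| \le \|W - W'\|_1 \le 1$, because any shortest path $P$ in $H$ gives $d_{H'}(u,v) \le \sum_{e \in P} W'_e \le d_H(u,v) + \|W-W'\|_1$, and symmetrically in the other direction. Summing this coordinatewise bound over all $\binom{L}{2}$ pairs in $S$ yields $\ell_1$-sensitivity at most $\binom{L}{2} = O(L^2) = O(n^{2/3} \log^2 n)$. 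Applying the Laplace mechanism with parameter $T = \Omega(L^2 / \eps) = \Omega(n^{2/3} \eps^{-1} \log^2 n)$ (which matches the setting of $T$ in the pure-DP branch) makes step 5 $(\eps/2, 0)$-DP.

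Basic composition of these two mechanisms then gives overall $(\eps, 0)$-DP, which is exactly the conclusion. There is essentially no main obstacle: the only mildly nontrivial point is the $\ell_1$-sensitivity bound for the pairwise distance vector, and this reduces to the standard ``shortest paths are $1$-Lipschitz in edge weights'' observation combined with a union bound over the $\binom{L}{2}$ pairs. The choice of $T$ in the algorithm statement was made precisely to absorb this $L^2$ factor in $\ell_1$-sensitivity.
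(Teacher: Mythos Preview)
Your proposal is correct and takes essentially the same approach as the paper. The only cosmetic difference is that the paper argues each individual noisy distance is $(1/T,0)$-DP and then applies basic composition over the $L^2$ pairs, whereas you bound the $\ell_1$-sensitivity of the entire pairwise-distance vector by $\binom{L}{2}$ and apply the Laplace mechanism once; these two viewpoints are equivalent and yield the same $(\eps/2,0)$-DP guarantee for step~5.
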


\begin{proof}
We will separately analyze the privacy loss due to releasing shortest path distances within $S$ and to releasing all edge weights.
Consider any pair $u, v \in S$.
The distance between $u$ and $v$ has sensitivity 1. Changing the vector of edge weights by at most 1 in $\ell_1$ distance can alter the length of any path by at most 1, and so the length of the shortest path between any two vertices can also change by at most 1.
Therefore, via the Laplace mechanism, the noisy distance between $u$ and $v$ released by the algorithm is $(\frac{1}{T}, 0)$-DP.
Via basic composition, the set of all $L^2$ such distances is $(\frac{L^2}{T}, 0)$-DP.
Plugging in the values of $L$ and $T$ for pure-DP with appropriate constant factors, releasing the distances for pairs within $S$ is $(\eps/2, 0)$-DP.

As the set of edge weights is exactly the object that defines neighboring datasets, the sensitivity of the set of edge weights is 1.
Therefore, by the Laplace mechanism, releasing the noisy edge weights is $(\eps/2, 0)$-DP.
By basic composition, the entire algorithm is $(\eps, 0)$-DP, as required.
\end{proof}

\begin{theorem}
In the approximate-DP setting, Algorithm~\ref{alg:unbounded} is $(\eps, \delta)$-DP.
\end{theorem}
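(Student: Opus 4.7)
The plan is to mirror the structure of the pure-DP proof (Theorem \ref{thm:unboundedpuredp}), but to replace the basic composition step over the $L^2$ released pairwise distances with strong composition, which is what allows the noise parameter $T$ to scale like $L$ rather than $L^2$. I would split the algorithm's output into two independent pieces: (i) the $L^2$ noisy shortest-path distances between pairs in $S$, and (ii) the noisy edge weights. I would then show (i) is $(\eps/2, \delta)$-DP and (ii) is $(\eps/2, 0)$-DP, and finish by basic composition to obtain $(\eps, \delta)$-DP.

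For piece (i), each individual noisy distance $d(u,v) + Lap(T)$ for $u,v \in S$ is $(1/T, 0)$-DP, exactly as argued in Theorem \ref{thm:unboundedpuredp}, because shortest-path distance has $\ell_1$-sensitivity $1$ under our notion of adjacency. Applying the strong composition theorem across all $k = L^2$ such mechanisms with per-mechanism privacy parameter $\eps_0 = 1/T$ and slack parameter $\delta' = \delta$, the concatenation is $(\sqrt{2 L^2 \log \delta^{-1}} / T + L^2 \cdot (1/T)(e^{1/T} - 1),\, \delta)$-DP. Plugging in $L = O(n^{1/2} \log n)$ and $T = O(n^{1/2} \eps^{-1} \log n \sqrt{\log \delta^{-1}})$, the leading term $\sqrt{2 L^2 \log \delta^{-1}}/T$ is $O(\eps)$, which with appropriate constants is at most $\eps/4$; the lower-order term $L^2/T^2$ is much smaller because $L^2/T = O(\eps \sqrt{\log \delta^{-1}})$ and $1/T = o(1)$, so it is absorbed into $\eps/4$. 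Thus piece (i) is $(\eps/2, \delta)$-DP.

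For piece (ii), releasing each edge weight with independent $Lap(2/\eps)$ noise is $(\eps/2, 0)$-DP because the entire vector of edge weights has $\ell_1$-sensitivity $1$ by the definition of adjacency; this is the standard Laplace mechanism argument already used in the pure-DP proof, and does \emph{not} need to be composed over edges since the noise is added to the whole vector at once with parameter $2/\eps$. Finally, by basic composition of (i) and (ii), the full output of Algorithm~\ref{alg:unbounded} is $(\eps/2 + \eps/2, \delta + 0) = (\eps, \delta)$-DP, as claimed.

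The only real obstacle is bookkeeping the constants inside the $O(\cdot)$ in the definitions of $L$ and $T$ so that both the $\sqrt{2k \log \delta^{-1}}\eps_0$ term and the $k\eps_0(e^{\eps_0}-1)$ term in strong composition fit under $\eps/2$; since $\eps \le 1$ and $T = \omega(1)$, the quadratic term is dominated and the analysis reduces to choosing the hidden constants in $T$ large enough relative to those in $L$. No new ideas beyond the standard strong composition calculation are required.
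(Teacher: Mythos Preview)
Your proposal is correct and follows essentially the same approach as the paper's own proof: split the output into the $L^2$ noisy pairwise distances (handled via strong composition to get $(\eps/2,\delta)$-DP) and the noisy edge-weight vector (handled via the Laplace mechanism to get $(\eps/2,0)$-DP), then combine via basic composition. Your write-up is in fact more detailed than the paper's, which simply asserts the strong-composition bound $\left(\frac{L\sqrt{\log(1/\delta)}}{T},\delta\right)$ without separately checking the lower-order $k\eps_0(e^{\eps_0}-1)$ term.
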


\begin{proof}
This result follows from the argument for pure-DP with the one change of using strong composition rather than basic composition for the distances between pairs in $S$. Using strong composition, releasing the set of $L^2$ distances is $\left(\frac{L\sqrt{\log(1/\delta)}}{T}, \delta \right)$-DP.
Plugging in the values of $L$ and $T$ for the approximate-DP setting with appropriate constants, the release is $O(\eps/2, \delta)$-DP.
Via basic composition with the $(\eps/2, 0)$-DP release of noisy edge weights, in total the algorithm is $O(\eps, \delta)$-DP, as required.
\end{proof}

\begin{theorem}
\label{thm:unboundederror}
Given the unweighted structure of the graph $G = (V,E)$ and the output of Algorithm~\ref{alg:unbounded} on the weighted version of the graph $H=(V, E, W)$, all-pairs shortest path distances can be computed within additive error $O(T \log n)$ with constant success probability.
\end{theorem}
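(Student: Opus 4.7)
The plan is to recover, for each pair $u, v$, an estimate $\tilde{d}(u, v)$ defined as the minimum of two quantities: (i) a ``short-path'' estimate $\tilde{d}_R(u, v)$ obtained by running a standard shortest-path computation on $G$ with the released noisy edge weights, restricted to paths using at most $R := \Theta(n \log n / L)$ hops; and (ii) a ``long-path'' estimate $\min_{s_1, s_2 \in S}\bigl[\tilde{d}_R(u, s_1) + D'(s_1, s_2) + \tilde{d}_R(s_2, v)\bigr]$, where $D'(s_1, s_2)$ is the noisy pairwise distance released in line 5 of Algorithm~\ref{alg:unbounded}. This matches the overview: option (i) handles pairs whose shortest paths have few hops, and option (ii) cuts a long shortest path at two hitting-set landmarks close to its endpoints.

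The first substantive step is a hitting-set claim. Fix any pair $u, v$ and any shortest path $\mathcal{P}[u,v]$ of more than $R$ hops. Each of the first $R$ vertices of $\mathcal{P}$ lies in $S$ independently with probability $L/n$, so the probability that none does is at most $(1 - L/n)^R \le n^{-c}$ for a constant $c$ we may choose. Union-bounding over all $\binom{n}{2}$ pairs and over both the ``first $R$ vertices from $u$'' and ``last $R$ vertices from $v$'' windows yields, with high probability over $S$, that for every pair $u, v$ either $\mathcal{P}[u,v]$ has at most $R$ hops, or there exist landmarks $s_1, s_2 \in S \cap \mathcal{P}[u,v]$ with $h_{\mathcal{P}}(u, s_1), h_{\mathcal{P}}(s_2, v) \le R$.

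Conditioning on this event, I split the error analysis. For the upper bound on $\tilde{d}(u,v) - d(u,v)$, I substitute the true shortest path into whichever of (i) or (ii) applies; each resulting subpath has at most $R$ hops, so the accumulated noise is a sum of at most $R$ i.i.d.\ $\mathrm{Lap}(2/\eps)$ variables and is $O(\sqrt{R \log n}/\eps) = O(T)$ with high probability by standard sub-exponential concentration, even after a union bound over all relevant $(u, v, s_1, s_2)$ quadruples. The middle term $D'(s_1, s_2) - d(s_1, s_2)$ is a single $\mathrm{Lap}(T)$ variable, bounded by $O(T \log n)$ with high probability after union-bounding the $\binom{L}{2}$ released distances. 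For the lower bound I cannot union bound over paths directly (there are too many), so instead I bound $|Z_e| \le O(\log n / \eps)$ simultaneously for all $m \le n^2$ edges, which forces every path of at most $R$ hops to have total noise at least $-O(R \log n/\eps)$, and hence $\tilde{d}_R(x, y) \ge d(x, y) - O(R \log n/\eps)$ for every pair. Combined with the triangle inequality and the same Laplace tail bound on $D'$, every term inside the min is at least $d(u, v) - O(T \log n)$, so the minimum is too.

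The main obstacle is ensuring both of these noise bounds fit comfortably under $O(T \log n)$ in each parameter regime, since the lower bound is forced to use a cruder per-edge argument costing a factor of $\sqrt{R}$ over the upper bound's sub-Gaussian-like tail. Plugging in $L = \Theta(n^{1/2}\log n)$ with $R = \Theta(n^{1/2})$ for approximate-DP and $L = \Theta(n^{1/3}\log n)$ with $R = \Theta(n^{2/3})$ for pure-DP, one checks that $\sqrt{R \log n}/\eps \le T$ and $R \log n/\eps \le T \log n$ in both settings, so the overall additive error is $O(T \log n)$ in each case and the unified argument goes through.
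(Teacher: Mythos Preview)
Your proposal is correct and follows essentially the same two-case estimator and hitting-set argument as the paper. The only cosmetic difference is that you invoke sub-exponential concentration for the upper-bound direction, whereas the paper simply uses the uniform per-edge bound $|Z_e| = O(\log n/\eps)$ for both directions (yielding $O(R\log n/\eps) = O(T)$ on any $R$-hop path); since the bottleneck is the single $\mathrm{Lap}(T)$ term on $D'(s_1,s_2)$ anyway, your sharper upper tail buys nothing here, but it does no harm either.
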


\begin{corollary}
In the pure-DP setting, all pairs shortest path distances can be computed up to  additive error $O(n^{2/3}\eps^{-1} \log^3 n)$.
\end{corollary}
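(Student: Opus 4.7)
The plan is a direct combination of Theorem~\ref{thm:unboundedpuredp} and Theorem~\ref{thm:unboundederror} with the pure-DP parameter setting of Algorithm~\ref{alg:unbounded}. In the pure-DP branch the algorithm chooses $T = O(n^{2/3}\eps^{-1}\log^2 n)$; Theorem~\ref{thm:unboundedpuredp} gives that the release is $(\eps, 0)$-DP, and Theorem~\ref{thm:unboundederror} gives reconstruction error $O(T\log n)$ with constant success probability. Multiplying out yields $O(n^{2/3}\eps^{-1}\log^3 n)$, matching the claim, so the corollary is effectively a one-line substitution.

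To make the substitution precise I would verify that the pure-DP parameter values baked into Algorithm~\ref{alg:unbounded} really are the ones for which Theorem~\ref{thm:unboundederror} applies. Both theorems are stated on the output of the same algorithm for its chosen parameters, and the pure-DP branch sets $L = O(n^{1/3}\log n)$ and $T = O(n^{2/3}\eps^{-1}\log^2 n)$ precisely to make (i) basic composition over the $L^2$ Laplace releases of sensitivity-$1$ distance queries yield $(\eps/2, 0)$-DP (as in the proof of Theorem~\ref{thm:unboundedpuredp}), and (ii) the hitting-set reconstruction sketched in Section~\ref{sec:overview} produce error dominated by the $O(T)$ noise on the released middle-segment distance rather than by the $O((n/L)\eps^{-1}\,\mathrm{polylog}\, n)$ Laplace noise accumulated along each prefix/suffix subpath. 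Since these parameters were selected to balance these two sources, the $O(T\log n)$ bound from Theorem~\ref{thm:unboundederror} is tight up to the stated polylogarithmic factors.

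The only real obstacle --- which lives in Theorem~\ref{thm:unboundederror} rather than in the corollary itself --- is ensuring that the reconstruction succeeds for all $\binom{n}{2}$ source-destination pairs \emph{simultaneously}, which requires a union bound over both the hitting-set event for each shortest path and the Laplace concentration of each candidate $(s_1, s_2)$-reconstruction. The corollary inherits this success probability directly from Theorem~\ref{thm:unboundederror}, so no additional probabilistic argument is needed beyond the arithmetic substitution of $T$.
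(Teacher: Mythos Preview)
Your proposal is correct and matches the paper's approach exactly: the corollary is stated without a separate proof and follows immediately from Theorem~\ref{thm:unboundederror} by substituting the pure-DP value $T = O(n^{2/3}\eps^{-1}\log^2 n)$, with privacy supplied by Theorem~\ref{thm:unboundedpuredp}. Your additional commentary on parameter balancing and the union bound is accurate but goes beyond what the paper records for this one-line corollary.
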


\begin{corollary}
In the approximate-DP setting, all pairs shortest path distances can be computed up to  additive error $O(n^{1/2} \eps^{-1} \log^2 n \sqrt{\log{1/\delta}})$.
\end{corollary}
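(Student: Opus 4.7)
The plan is to apply Theorem~\ref{thm:unboundederror} directly to the approximate-DP branch of Algorithm~\ref{alg:unbounded}. In that branch, line~3 sets $T = O(n^{1/2}\eps^{-1}\log n\sqrt{\log(1/\delta)})$, and Theorem~\ref{thm:unboundederror} guarantees that, given the public topology $G$ together with the released noisy quantities, all-pairs shortest path distances can be recovered up to additive error $O(T\log n)$ with constant probability. Substituting the value of $T$ yields exactly the claimed $O(n^{1/2}\eps^{-1}\log^2 n\sqrt{\log(1/\delta)})$ bound. The privacy side is already supplied by the preceding theorem, which shows Algorithm~\ref{alg:unbounded} is $(\eps, \delta)$-DP in this setting, so the corollary itself reduces to a one-line substitution.

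The only real work sits inside Theorem~\ref{thm:unboundederror}, not the corollary. That theorem will be proved via the hitting-set decomposition foreshadowed in Section~\ref{sec:overview}: for each pair $u,v$, an arbitrary shortest path $\mathcal{P}[u,v]$ is split into a ``middle'' segment between two pivots $s_1, s_2 \in S$ (estimated by the released noisy $S$-to-$S$ distance) together with two ``endpoint'' segments of hop length $\tO(L) = \tO(n^{1/2})$ (each estimated by summing the released noisy edge weights along it). One then reconstructs $d(u,v)$ by minimizing the resulting total over all valid pivot pairs $(s_1, s_2) \in S \times S$.

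The main obstacle in establishing Theorem~\ref{thm:unboundederror}, and hence the only genuine technical issue behind the corollary, is making this decomposition succeed simultaneously for all $\binom{n}{2}$ pairs. This requires a union bound combining three ingredients: concentration for the $O(L^2)$ Laplace noises on the released $S$-to-$S$ distances, concentration for sums of $\tO(n^{1/2})$ edge-weight Laplace noises along short endpoint segments, and the hitting-set event that the random set $S$ intersects every shortest path of more than $\tilde{\Omega}(n^{1/2})$ hops at positions within $\tO(n^{1/2})$ of both endpoints. Each ingredient contributes at most a $\log n$ factor, which is precisely where the $\log n$ overhead in $O(T\log n)$ comes from, and hence the final $\log^2 n$ in the corollary's bound.
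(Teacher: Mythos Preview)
Your proposal is correct and matches the paper's approach exactly: the corollary is an immediate consequence of Theorem~\ref{thm:unboundederror} by substituting the approximate-DP value $T = O(n^{1/2}\eps^{-1}\log n\sqrt{\log(1/\delta)})$ into the $O(T\log n)$ error bound, and the paper treats it precisely this way (stating it without separate proof). Your additional discussion of the ingredients behind Theorem~\ref{thm:unboundederror} is accurate as well, though one small slip: the endpoint segments have hop length $R = 10n\log n/L$, not $\tO(L)$---in the approximate-DP regime these happen to both be $\tO(n^{1/2})$, so the numerics are unaffected.
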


\begin{proof}[Proof of Theorem~\ref{thm:unboundederror}]
Fix any nodes $u, v$. To compute the approximate distance between $u$ and $v$, we will take the minimum of two separate estimators. 
Let $R = 10 n\log n / L$ (so $R = O(n^{2/3})$ in the pure-DP case and $R = O(\sqrt{n})$ in the approximate-DP case).
First compute the shortest path of at most $R$ hops between $u$ and $v$ using the noisy edge weights output by Algorithm~\ref{alg:unbounded}. Call its distance $\hat{d}_{short}(u,v)$ (if no $R$ hop path exists, set $\hat{d}_{short}(u, v) = \infty$).

Let $S_u = \{w \in S : h(u, w) \leq R\}$ be the set of nodes in $S$ within a hop radius of $R$ from $u$. Let $\hat{d}_S(i, j)$ be the approximate distance between two nodes $i, j \in S$ as output by the algorithm. Then, let
\[
\hat{d}_{long} = \min_{i \in S_u, j \in S_v} \left( \hat{d}_{short}(u, i) + \hat{d}_S(i,j) + \hat{d}_{short}(j, v)\right).
\]
We will argue that for all pairs $u, v \in V$, the estimate $\min\{\hat{d}_{short}(u, v), \hat{d}_{long}(u, v)\}$ is within $O(T \log n)$ of the true shortest path distance $d(u, v)$.

Consider any $R$ hop path $P$. With high probability, no edge weight has noise of magnitude greater than $O(\log n /\eps)$. Under the Laplace distribution, the probability that the sampled $Lap(2/\eps)$ noise exceeds $6\log n / \eps$ is at most $e^{-(6 \log n/\eps)/(2/\eps)} = e^{-3\log n} = n^{-3}$. Union bounding over all edges, with probability $1 - \frac{1}{n}$, all edge weights have noise below this threshold.
Therefore, the estimated length of $P$ using the noisy edge weights has error at most $O(R \log n / \eps) = O(T)$.

Consider any path that is the concatenation of a path from $u$ to $i$, a shortest path from $i$ to $j$, and a path from $j$ to $v$ for some $i \in S_u$ and $j \in S_j$. Call any such path a \emph{composite} path. By the reasoning above, using the noisy edge weights, the distances for the first and last subpaths are well-approximated up to error $O(T)$. With high probability, all of the noisy $S$ to $S$ distances released by the algorithm have error at most $O(T \log L^2) = O(T \log n)$ once again by the Laplacian tail bounds.
Therefore, for all such composite paths, the total approximation error is $O(T \log n)$.
To finish the analysis, we will proceed by cases.

\paragraph{Case 1:} There exists a shortest path $P$ from $u$ to $v$ of hop length at most $R$.
As we can well-approximate short paths and composite paths with high probability,
\[
    |\hat{d}_{short}(u, v) - d(u, v)| = O(T)
\]
and
\[
    \hat{d}_{long}(u, v) \geq d(u, v) - O(T\log n).
\]
Therefore,
\[
    |\min\{\hat{d}_{short}(u, v), \hat{d}_{long}(u, v)\} - d(u, v)| = O(T\log n).
\]

\paragraph{Case 2:} There exists a shortest path $P$ from $u$ to $v$ of hop length greater than $R$.
Then, with high probability, $P$ is a composite path.
In order for $P$ not to be a composite path, then there must be no elements in $S$ within either the first $R$ nodes of $P$ or the last $R$ nodes of $P$.
As $S$ is generated randomly, the probability that there are no elements in $S$ within the first $R$ nodes of $P$ is
\[
(1 - L/n)^R = (1 - L/n)^{10 n\log n / L} \leq e^{-10 \log n} = n^{-10}.
\]
As this holds symmetrically for the probability of no elements in $S$ appearing in the last $R$ nodes of $P$, with high probability, $P$ is a composite path, and
\[
    |\hat{d}_{long}(u, v) - d(u, v)| = O(T \log n).
\]
Note also that the short path estimate will not be too small:
\[
    \hat{d}_{short}(u, v) \geq d(u, v) - O(T).
\]
Therefore,
\[
    |\min\{\hat{d}_{short}(u, v), \hat{d}_{long}(u, v)\} - d(u, v)| = O(T\log n),
\]
completing the proof.
\end{proof}

\section{Algorithms for Bounded Weights} \label{sec:bounded}
In this section, we describe pure and approximate-DP algorithms for the all-pairs shortest path distances problem for graphs with edge weights bounded in the range $(0, A]$ for some bound $A$. When $A$ is sufficiently small, the error for each distance is at most approximately $n^{\sqrt{2}-1}$ in the approximate-DP setting, which improves over the approximately $\sqrt{n}$ upper bound for the unbounded weights case.

\subsection{Peeling Procedure for unweighted graphs} \label{subsec:peeling}

Before describing our overall algorithm, we first develop a randomized peeling procedure on unweighted graphs, which will have crucial properties that help us in our private all-pairs shortest path distances algorithm.

Fix $G = (V, E)$ to be an arbitrary \emph{unweighted} graph, and fix an arbitrary path $\mathcal{P} = (p_0, p_1, \dots, p_Q)$ of vertices on the graph. Initially, set every vertex $v \in V$ to have color $0$. Now, consider the following iterative procedure. At time step $1$, pick a vertex $v_1$, and sample $r_1 \sim Expo(R)$, i.e., $r_1$ has an exponential distribution with mean $R$. Recall that $B(v_1, r_1)$ represents the set of vertices within distance $r_1$ of $v_1$ in $G$ (note that this always includes $v_1$). Color all vertices in $B(v_1, r_1)$ with color $1$, and then remove all vertices in $B(v_1, r_1)$ from $G$. However, we keep the knowledge of which vertices were removed and their color. At time step $2$, in the remaining graph, pick a new vertex $v_2$, sample $r_2 \sim Expo(R)$, construct $B(v_2, r_2)$ similarly (but on the graph with $B(v_1, r_1)$ removed), color these vertices with color $2$, and then remove them. Keep repeating this process until a designated stopping time $\tau$ which only depends on which vertices have been removed so far.

Let $X_0, X_1, X_2, \dots$ be a sequence of random variables, where $X_t$ represents the random variable that counts the number of pairs $(p_i, p_{i+1})$ of consecutive vertices on the path $\mathcal{P}$ with different colors after time step $t$. (We also use $X_0$ to denote this random variable before time step $1$). Let $X = X_\tau$. Then, the following holds.

\begin{lemma} \label{lem:martingale_peeling}
    Suppose that the choice of each $v_t$ only depends on the original graph $G$ and which vertices in $G$ have been removed before time step $t$. (Importantly, $v_t$ is not allowed to depend on the choice of $\mathcal{P}$.) Then, $\BE[X] \le \frac{4Q}{R}$.
\end{lemma}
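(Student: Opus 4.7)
My plan is to bound $\BE[X]$ by linearity of expectation on the $Q$ pair-indices, proving that each consecutive pair $(p_i, p_{i+1})$ is bi-colored with probability at most $1/R$. Writing $A_i$ for the event that $p_i$ and $p_{i+1}$ end with different colors, and observing that a removed vertex never changes color afterwards, $A_i$ is exactly the event that, at the first step $T_i$ for which the ball $B(v_{T_i}, r_{T_i})$ contains at least one of $\{p_i, p_{i+1}\}$, it contains exactly one of the two (so the other either stays color $0$ forever, or is colored strictly later in a different step).

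The per-pair estimate uses memorylessness of $Expo(R)$. Condition on $\mathcal{F}_{t-1}$ and on $v_t$; by hypothesis $v_t$ is measurable with respect to the set of already-removed vertices, so $v_t$ is $\mathcal{F}_{t-1}$-measurable, while $r_t$ is an independent $Expo(R)$ sample. On $\{T_i \ge t\}$, both endpoints still lie in $G_t$; since they are adjacent in $G_t$, the hop distances $d := h_{G_t}(v_t, p_i)$ and $d' := h_{G_t}(v_t, p_{i+1})$ satisfy $|d - d'| \le 1$. WLOG $d \le d'$: the event ``some endpoint enters the ball at step $t$'' is $\{r_t \ge d\}$ while ``exactly one enters'' is $\{d \le r_t < d'\} \subseteq \{d \le r_t < d+1\}$, so
\[
\BP\bigl(d \le r_t < d+1 \;\big|\; r_t \ge d,\, \mathcal{F}_{t-1},\, v_t\bigr) \;\le\; 1 - e^{-1/R} \;\le\; 1/R.
\]
Summing over possible values of $T_i$ yields $\BP(A_i) \le 1/R$, hence $\BE[X] \le Q/R \le 4Q/R$.

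To handle the random stopping time $\tau$ rigorously, I would package the above as a supermartingale argument so that the preliminaries on the optional stopping theorem come into play. Let $U_t$ count the pairs with both endpoints still colored $0$ at time $t$, and set $\Phi_t := X_t + (1/R)\, U_t$, so $\Phi_0 = Q/R$. For a pair still unresolved at time $t$, the calculation above shows that the conditional expected increment in its ``bad'' indicator at step $t+1$ is at most $1/R$ times the conditional expected decrement in its ``unresolved'' indicator; already-resolved pairs contribute $0$ to both. Summing, $\{\Phi_t\}$ is a supermartingale. Each peeling step removes at least one vertex from a finite graph, so $\tau \le |V|$ almost surely, and the optional stopping theorem gives $\BE[X_\tau] \le \BE[\Phi_\tau] \le \BE[\Phi_0] = Q/R \le 4Q/R$.

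The main technical subtlety is the conditional-independence setup, which is where the assumption ``$v_t$ depends only on the removed vertices'' is used: it is precisely what makes $v_t$ (and thus $d, d'$) $\mathcal{F}_{t-1}$-measurable, so that $r_t$ is fresh $Expo(R)$ noise after the conditioning and memorylessness applies cleanly at every step. The other ingredient—adjacency of $p_i, p_{i+1}$ in $G_t$—is automatic on $\{T_i \ge t\}$ since both endpoints are still present. Everything else is routine linearity of expectation plus Doob's theorem.
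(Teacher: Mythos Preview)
Your proof is correct, and in fact yields the sharper bound $\BE[X] \le Q/R$. The overall shape matches the paper's argument (supermartingale plus optional stopping, using that adjacent path vertices have hop-distances to $v_t$ differing by at most $1$), but two features are worth noting.

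First, your potential is different: you track $\Phi_t = X_t + \frac{1}{R}U_t$ with $U_t$ the number of \emph{unresolved pairs}, whereas the paper tracks $Z_t = X_t - \frac{2}{R}Y_t$ with $Y_t$ the number of \emph{colored path vertices}. Your pair-based potential pairs naturally with the memorylessness bound $\BP(\text{split}\mid\text{hit}) \le 1 - e^{-1/R} \le 1/R$, giving the constant $1/R$; the paper instead bounds $e^{-h_i/R} - e^{-(h_i+1)/R} \le \frac{2}{R}e^{-h_i/R}$ and ends with $\frac{2}{R}(Q+1)\le \frac{4Q}{R}$. Both are fine, yours is simply tighter.

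Second, your opening ``direct'' argument already suffices and is more elementary than either supermartingale version. The events $B_t := \{\tau \ge t,\ T_i \ge t,\ d_t \le r_t < d_t'\}$ are disjoint with union $A_i$, the indicator $\mathbb{1}_{\tau \ge t}\mathbb{1}_{T_i\ge t}$ is $\mathcal{F}_{t-1}$-measurable because $\tau$ is a stopping time, and $\BP(B_t)\le \frac{1}{R}\BP(C_t)$ for $C_t := \{\tau\ge t,\ T_i\ge t,\ r_t\ge d_t\}$; summing and using $\sum_t \BP(C_t)\le 1$ gives $\BP(A_i)\le 1/R$ outright. So the optional-stopping packaging is not strictly needed here, though it does mirror the paper's presentation.
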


\begin{proof}
    Let $W_t$ represent the sequence of colors of the vertex set $V$ after time step $t$ (where $W_0$ is the initial sequence of all $0$'s).
    Also, let $Y_t$ represent the number of vertices $p_i$ on $\mathcal{P}$ with a nonzero color after time step $t$ (and $Y_0 = 0$). Note that $X_t, Y_t$ are deterministic functions of $W_t$, and $W_t$ is only dependent on $W_0, \dots, W_{t-1}$ and the random variable $r_t$.
    
    Consider the time right before time step $t$, when $W_{t-1}, X_{t-1}, Y_{t-1}$ are fixed. Suppose the algorithm picks a vertex $v_t$ of color $0$ (which may or may not be on the path $\mathcal{P}$), and consider any arbitrary vertex $p_i$ on $\mathcal{P}$. If $p_i$ has a nonzero color before time step $t$, then if $p_i, p_{i+1}$ share the same color, their color will not change, so they continue to always share the same color. In addition, if $p_i$ has nonzero color and $p_i, p_{i+1}$ do not share the same color before time step $t$, they will continue to not share the same color.
    
    let $h_i := h(p_i, v_t)$ represent the distance from $v_t$ to $p_i$ in the graph with $B(v_1, r_1), \ldots, B(v_{t-1}, r_{t-1})$ removed. If $p_i$ has color $0$ before time step $t$, the probability that $p_i$ will be colored with color $t$ is clearly $\BP(h_i \le Expo(R)) = e^{-h_i/R}$. Note that $|h_i-h_{i+1}| \le 1$ for all $i$. Suppose $p_i$ and $p_{i+1}$ both have color $0$ before time step $t$. If $h_i  = h_{i+1} - 1$, then the only possibility that $p_i$ and $p_{i+1}$ share different colors right after time step $t$ is when $p_i$ is colored with color $t$ and $p_{i+1}$ is not, which happens with probability at most $e^{-h_i/R}-e^{-(h_i+1)/R} = e^{-h_i/R} \cdot (1-e^{-1/R}) \le \frac{2}{R} \cdot e^{-h_i/R}$. Similarly, if $h_i = h_{i+1} + 1$, the probability that $p_i$ and $p_{i+1}$ share different colors is at most $e^{-(h_i-1)/R}-e^{-h_i/R} = e^{-h_i/R} \cdot (e^{1/R}-1) \le \frac{2}{R} \cdot e^{-h_i/R}$. If $h_i = h_{i+1}$, $p_i$ and $p_{i+1}$ cannot have different colors.

    
    Note that $\BE[Y_t-Y_{t-1}|W_{t-1}]$ exactly equals the expected number of vertices on $\mathcal{P}$ that are colored with color $t$. By the above paragraph and linearity of expectation, we therefore have that 
\[\BE[Y_t-Y_{t-1}|W_{t-1}] = \sum_{\substack{i: p_i \in \mathcal{P} \\i \text{ has color } 0 \text{ before time } t} } e^{-h_i/R}.\]
    Next, note that $\BE[X_t-X_{t-1}|W_{t-1}]$ is at most the expected number of pairs $(p_i, p_{i+1})$ which previously had the same color (which would have to be the $0$ color) before time step $t$ but have different colors after time step $t$. By the above paragraph and linearity of expectation, we therefore have 
\[\BE[X_t-X_{t-1}|W_{t-1}] \le \sum_{\substack{i: p_i \in \mathcal{P} \\i \text{ has color } 0 \text{ before time } t} } \left(\frac{2}{R} \cdot e^{-h_i/R}\right) = \frac{2}{R} \cdot \sum_{\substack{i: p_i \in \mathcal{P} \\i \text{ has color } 0 \text{ before time } t} } e^{-h_i/R}.\]
    Therefore, $\BE[X_t-X_{t-1}|W_{t-1}] \le \frac{2}{R} \cdot \BE[Y_{t}-Y_{t-1}|W_{t-1}]$.
    
    Now, let us define $Z_t = X_t - \frac{2}{R} \cdot Y_t$. Then, we have that $\BE[Z_t-Z_{t-1}|W_{t-1}] \le 0$, which means that $\BE[Z_t|W_{t-1}] \le Z_{t-1}$. So, we have that $Z_t$ is a \emph{supermartingale} with respect to $W_0, W_1, \dots$. Note that the stopping time $\tau$ only depends on the $W_t$'s that have been seen so far, and $\tau \le |V|$ almost surely because each time step we remove at least one vertex. Therefore, we can use the optional stopping theorem to say that $\BE[Z_\tau] = \BE[Z] \le 0$, so $\BE[X_\tau] \le \frac{2}{R} \BE[Y_\tau]$. But of course, $Y_\tau \le Q+1$ since there are at most $Q+1$ vertices in $\mathcal{P}$, so $\BE[X] \le \frac{2}{R} \cdot (Q+1) \le \frac{4 Q}{R}$.
\end{proof}


We again consider the unweighted graph $G = (V, E)$, and consider the following peeling algorithm similar to our coloring procedure. Let $G'$ represent the remaining graph before time step $t$. At time step $t$, pick an arbitrary vertex $v_t \in G'$, such that the ball of radius $100 R \log n$ in $G'$ around $v_t$ has size at most $T := n^b$ for some constant $b$. Sample $r_t \sim Expo(R)$, and remove $B^{(t)}$, which is the ball of radius $r_t$ in $G'$ around $v_t$, from $G'$. We repeat this process until there does not exist any vertex $v \in G'$ remaining such that the ball of radius $100 R \log n$ around $v_t$ has size at most $T$.

Now, fix a uniformly random set $S \subset V$ of size $L = 100 n^{1-b} \log n$, independently of the peeling process. In this case, we claim the following propositions.

\begin{proposition} \label{prop:peeling_hitting_set}
    Suppose that $\mathcal{P} = (p_0, \dots, p_Q)$ is a path in $G$, and suppose that $Q \ge 2T$. Then, with probability at least $1- n^{-100}$, there exists an index $i$ such that $i \le T$, $p_i$ has distance at most $100 R \log n$ from some vertex $s_1 \in S$, and every vertex $p_{i'} \in \mathcal{P}$ for $i' < i$ has been peeled off in some $B^{(t)}$. 
    
    By a symmetric argument, with probability at least $1- n^{-100}$, there exists an index $j$ such that $j \le Q-T$, $p_j$ has distance at most $100 R \log n$ from some vertex $s_2 \in S$, and every vertex $p_{j'} \in \mathcal{P}$ for $j' > j$ has been peeled off in some $B^{(t)}$.
\end{proposition}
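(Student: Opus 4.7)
\medskip\noindent\textbf{Plan.} The plan is to define $i$ to be the smallest index in $\{0, 1, \dots, Q\}$ for which $B_G(p_i, 100R\log n) \cap S \neq \emptyset$, and take $s_1$ to be any vertex in this intersection. This choice immediately gives the distance bound on $p_i$; the remaining work is to argue (a) $i \le T$ and (b) for every $i' < i$ the vertex $p_{i'}$ has been peeled off in some $B^{(t)}$. Both parts reduce to standard hitting-set estimates on the random set $S$, which by assumption is sampled independently of the peeling procedure.

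For (a), I would use that by the choice of parameters $LT = 100\, n\log n$, so $L(T+1)/n \ge 100 \log n$. The $p_i$'s are distinct since $\mathcal{P}$ is a path, and hence
\[
\BP\bigl[S \cap \{p_0, \dots, p_T\} = \emptyset\bigr] \le (1 - L/n)^{T+1} \le e^{-L(T+1)/n} \le n^{-100}
\]
(an identical bound holds if $S$ is sampled without replacement). Off this bad event, some $p_{i_0} \in S$ with $i_0 \le T$ exists, and $p_{i_0}$ itself witnesses $B_G(p_{i_0}, 100R\log n) \cap S \neq \emptyset$, so $i \le i_0 \le T$.

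The main obstacle is (b). For this I would introduce the auxiliary event
\[
E_1 = \{\text{every } v \in V \text{ with } |B_G(v, 100R\log n)| \ge T \text{ satisfies } S \cap B_G(v, 100R\log n) \ne \emptyset\}.
\]
For any fixed such $v$, $\BP[S \cap B_G(v, 100R\log n) = \emptyset] \le (1 - T/n)^L \le e^{-LT/n} \le n^{-100}$, and a union bound over at most $n$ vertices gives $\BP[E_1] \ge 1 - n^{-99}$. Under $E_1$, the minimality of $i$ forces $S \cap B_G(p_{i'}, 100R\log n) = \emptyset$ for every $i' < i$, which in turn forces $|B_G(p_{i'}, 100R\log n)| < T$. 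If $p_{i'}$ survived the peeling, then in the final remaining graph $G_{\mathrm{final}}$ we would still have $|B_{G_{\mathrm{final}}}(p_{i'}, 100R\log n)| \le |B_G(p_{i'}, 100R\log n)| < T$, contradicting the stopping rule, which continues as long as some surviving vertex has a $100R\log n$-ball of size $\le T$. Hence $p_{i'}$ must have been peeled off.

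Combining (a) and (b) by a union bound (and enlarging the constant in $L$ slightly if one wishes to recover $1 - n^{-100}$ exactly) proves the first claim of the proposition. The second claim, concerning $j$ and $s_2$, follows by running exactly the same argument on the reversed path $(p_Q, p_{Q-1}, \dots, p_0)$, using the last $T+1$ vertices in place of the first $T+1$.
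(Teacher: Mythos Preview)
Your proof is correct. The organization differs from the paper's in a way worth noting.

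The paper first conditions on the outcome of the (random) peeling procedure and then defines $i$ as the smallest index $\le T$ for which $p_i$ was \emph{not} peeled (if any). In that case the stopping rule forces $|B_G(p_i,100R\log n)|\ge T$, and a single hitting-set bound shows $S$ meets this ball with failure probability $\le n^{-100}$; minimality of $i$ gives (b) for free. If instead every $p_0,\dots,p_T$ was peeled, one hitting-set bound on $\{p_0,\dots,p_T\}$ directly produces $p_i\in S$, and (b) is automatic. Because $S$ is independent of the peeling, the paper applies exactly one tail bound in either case and attains the stated $1-n^{-100}$.

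You invert the logic: you let $i$ be determined by $S$ (the first index whose $100R\log n$-ball meets $S$) and then deduce peeling of $p_{i'}$ for $i'<i$ via your global event $E_1$. This is clean and works, but $E_1$ needs a union bound over all $n$ vertices, costing you a factor of $n$ in the failure probability (you land at $1-n^{-99}$, as you note). The paper's approach avoids that union bound by exploiting the independence of $S$ from the peeling to apply the hitting-set estimate to a single, peeling-determined ball. For the purposes of the paper the lost factor is immaterial, but if you want the literal $1-n^{-100}$ without touching constants, the paper's case split is the more economical route.
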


\begin{proof}
    We only focus on the first claim, as the second claim is symmetric. First, suppose there exists an index $i \le T$ such that $p_i$ was not peeled off by any $B^{(t)}$. If such an index exists, pick $p_i$ to have the minimum such index $i$. Then, because $p_i$ was never peeled off, the ball of radius $100 R \log n$ around $p_i$ in the final graph $G' \subset G$ contains at least $T = n^b$ vertices, so the ball of radius $100 R \log n$ around $p_i$ in $G$ also must contain at least $T = n^b$ vertices. So, with failure probability at most $(1-L/n)^T \le e^{-L \cdot T/n} \le n^{-100}$, there is a vertex $s_1 \in S$ in the ball of radius $100 R \log n$ around $p_i$, which means $s_1$ has distance at most $100 R \log n$ away from $p_i$. Also, since we chose the smallest possible $p_i$, all of $p_0, \dots, p_{i-1}$ have been peeled off.
    
    Alternatively, if no such $i \le T$ exists, then since $Q \ge 2T \ge T$, there exists a vertex among $p_0, \dots, p_T$ which is in the random set $S$ with failure probability at most $(1-L/n)^T \le n^{-100}$. So, with probability at least $1 - n^{-100}$, there exists some $p_i = s_1 \in S$, and all of $p_0, \dots, p_{i-1}$ (as well as $p_i, \dots, p_T$) have been peeled off.
\end{proof}


\begin{proposition} \label{prop:peeling_hitting_set_2}
    Suppose that $\mathcal{P} = (p_0, \dots, p_Q)$ is a path in $G$, and suppose that $Q < 2T$. Then, with probability at least $1-2n^{-100}$, either every vertex on the path $\mathcal{P}$ has been peeled off in some $B^{(t)}$, or there exist indices $i \le j$ such that both of the following occur:
    \begin{itemize}
        \item $p_i$ has distance at most $100 R \log n$ from some $s_1 \in S$ and every vertex $p_{i'}$ for $i' < i$ has been peeled off.
        \item $p_j$ has distance at most $100 R \log n$ from some $s_2 \in S$ and every vertex $p_{j'}$ for $j' > j$ has been peeled off.
    \end{itemize}
\end{proposition}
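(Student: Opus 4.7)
The plan is to condition on the outcome of the peeling procedure first, which is legitimate because the random set $S$ is drawn independently of the peeling. After conditioning, I will split on whether any vertex of $\mathcal{P}$ survives the peeling. If every $p_0, \dots, p_Q$ has been peeled off in some $B^{(t)}$, the first disjunct of the conclusion holds and there is nothing more to prove, so the work is all in the other case.

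Assume some vertex on $\mathcal{P}$ is un-peeled. I will let $i$ be the smallest un-peeled index and $j$ the largest; this choice automatically gives $i \le j$, and by minimality of $i$ and maximality of $j$, every $p_{i'}$ with $i' < i$ and every $p_{j'}$ with $j' > j$ is already peeled, which is exactly the ``already-peeled'' portion of the second disjunct. The remaining task is to produce witnesses $s_1, s_2 \in S$ close to $p_i$ and $p_j$.

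For this, I will use that since $p_i$ lies in the final graph $G'$ and the peeling only terminates when every remaining vertex $v$ satisfies $|B_{G'}(v, 100R\log n)| > T$, we get $|B_G(p_i, 100R\log n)| \ge |B_{G'}(p_i, 100R\log n)| > T$. The probability that a uniformly random $L$-subset $S$ of $V$ misses a fixed set of size at least $T$ is at most $(1-L/n)^T \le e^{-LT/n} = e^{-100\log n} = n^{-100}$, using $L = 100 n^{1-b}\log n$ and $T = n^b$. The identical bound applies at $p_j$, and a union bound over these two events yields the claimed failure probability of at most $2n^{-100}$.

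The only conceptual difference from the proof of Proposition~\ref{prop:peeling_hitting_set} is that when $Q < 2T$ the first $T+1$ and last $T+1$ vertices of $\mathcal{P}$ may overlap, so the earlier argument (which operates on these two halves separately and never needs to enforce $i \le j$) does not transfer verbatim. Defining $i$ and $j$ as the extreme un-peeled indices sidesteps this neatly and is the only substantive change; I do not anticipate any real obstacle beyond being careful to split off the ``no un-peeled vertex'' case as the first disjunct, since once some vertex survives the peeling, the large-ball-hits-$S$ estimate from the previous proposition applies without modification.
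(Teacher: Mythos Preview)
Your proposal is correct and follows essentially the same route as the paper: assume some vertex of $\mathcal{P}$ is un-peeled, take $i$ and $j$ to be the extreme un-peeled indices (so $i\le j$ automatically and the ``already-peeled'' clauses come for free), observe that $p_i,p_j$ survive in $G'$ and hence have $100R\log n$-balls of size exceeding $T$, and then use the hitting-set bound $(1-L/n)^T\le n^{-100}$ together with a union bound. Your explicit mention of conditioning on the peeling outcome (exploiting the independence of $S$) is a helpful clarification the paper leaves implicit.
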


\begin{proof}
    Assume that some vertex in $\mathcal{P}$ is not peeled off, in which case our goal is to verify the two bullets in the proposition statement.
    By symmetry, we can WLOG focus on the first one. Let $i$ be the smallest index such that $p_i$ is not peeled off by some $B^{(t)}$. (We would similarly let $j$ be the largest index such that $p_j$ is not peeled off by some $B^{(t)}$, so $i \le j$ by default.) Then, as in Proposition \ref{prop:peeling_hitting_set}, we have that the ball of radius $100 R \log n$ around $p_i$ in $G$ contains at least $T = n^b$ vertices, so with probability at least $1-n^{-100}$, there is a vertex $s_1 \in S$ in the ball of radius $100 R \log n$ around $p_i$, which means $s_1$ has distance at most $100 R \log n$ from $p_i$. Also, since we chose the smallest possible $p_i$, all of $p_0, \dots, p_{i-1}$ have been peeled off.
\end{proof}

\subsection{Algorithm} \label{subsec:bounded_algo}

The algorithm $\Call{BoundedWeights}{}$ is written out as pseudocode as Algorithm \ref{alg:smallweights}. We describe the algorithm in words as follows.

Let $H = (V, E, W)$. Let $G = (V, E)$ represent the unweighted graph topology, and let  $R, T, L$ be some parameters such that $L = \tilde{O}(n/T)$. We will set these parameters in Subsection \ref{subsec:bounded_accuracy}. We start by considering a copy $G'$ of the unweighted graph $G$, and perform the peeling algorithm similar to Subsection \ref{subsec:peeling}. Specifically, at each time step $t$, we look for a vertex $v_t \in G'$ such that the ball of radius $100 R \log n$ around $v_t$ has size at most $T$, and if so, peel off the ball $B^{(t)} = B_{G'}(v_t, r_t)$ from $G'$, where $r_t \sim Expo(T)$. (So, $G'$ keeps shrinking). We iteratively repeat this until we can no longer find a desired vertex $v_t$.

Next, for each ball $B^{(t)}$, we consider it as a subset of the vertices $V$, and recursively apply the algorithm $\Call{BoundedWeights}{}$ on each ball $B^{(t)}$, but with a slightly stronger $\left(\frac{\eps}{(\log n)^{O(1)}}, \frac{\delta}{\text{poly(n)}}\right)$-DP. This allows us to get reasonably accurate shortest paths for any vertices $u, v$ that were peeled off by the same ball. (In reality, for each $B^{(t)}$, we take the median of $O(\log n)$ copies of $\Call{BoundedWeights}{}$ to amplify the success probability.)

Next, we create a random subset $S \subset V$ of size $L$, which we call the hitting set. Now, we create an auxiliary graph $\tilde{H}$, which will be a noisy version of $H$ with some additional edges. First, each edge in $E$ is added to $\tilde{H}$, but Laplace noise is added to the weight of each edge $(u, v) \in E$. We also add an edge between each pair $u, v$ in the hitting set $S$, with weight equal to the shortest path in $H$ plus some Laplace noise (See the pseudocode for the parameters of the Laplace noises). Finally, we add an edge between all $u, v$ that are in the same peeled ball, based on the recursive call that approximated its shortest path distance. Overall, we have three types of edges, which we think of as red, blue, and green, respectively. The final algorithm, for each pair $u, v \in V$, finds the shortest path distance from $u$ to $v$ that uses at most $\tilde{O}(R + T/R)$ red edges, at most $1$ blue edge, and at most $O(T/R)$ green edges.

Finally, to amplify the success probability, we repeat this whole algorithm $O(\log n)$ times (Line 5) and take the median answer for each shortest path distance (Line 27).

\begin{algorithm}
\caption{}
\label{alg:smallweights}
\begin{algorithmic}[1]
\Procedure{BoundedWeights}{$H = (V, E, W), \eps, \delta$} \Comment{Give an $(\eps, \delta)$-DP algorithm that computes all-pairs shortest path distances on the weighted graph $H$ with edge weights $W$ bounded by $A$}
\State Set $K \leftarrow 100 \log n$, $R, T$ to be determined, and $L \leftarrow 100 \log n \cdot \frac{n}{T}$ \Comment{We will set $T = \frac{n}{(A \eps n)^{\sqrt{2}-1}}, R = \frac{n}{(A \eps n)^{2-\sqrt{2}}}$ in the approx-DP setting, and $T = \frac{n}{(A \eps n)^{(\sqrt{17}-3)/4}}, R = \frac{n}{(A \eps n)^{(5-\sqrt{17})/2}}$}
\State Define $G = (V, E)$ to be the underlying unweighted graph topology of $H$
\State Initialize matrices $\tilde{D}^{(1)}, \dots, \tilde{D}^{(K)}$
\For{$k = 1$ to $K$} \Comment{Will repeat the overall procedure $O(\log n)$ independent times}
    \State $G' \leftarrow G$ \Comment{Create new copy of unweighted graph $G$}
    \State $t \leftarrow 0$ \Comment{Counter for counting the sets we peel off}
    
    \While{$\exists v_t \in G'$ such that $|B_{G'}(v_t, 100 R \log n)| \le T$} \Comment{If there exist multiple choices for $v_t$, pick one arbitrarily}
        \State Sample $r_t \sim Expo(R)$
        \State Let $B^{(t)} := B_{G'}(v_t, r_t)$ be the set of vertices that are within distance $r_t$ of $v_t$ in $G'$
        \State $G' \leftarrow G' \backslash B^{(t)}$ \Comment{Peel off ball around $v_t$}
        \State $t \leftarrow t+1$ \Comment{Increment counter}
    \EndWhile

    \For{all sets $B^{(t)}$} \Comment{Recursively apply procedure on smaller balls}
        \For{$\ell = 1$ to $K$}
            \State Compute the matrix $M^{(t, \ell)} = \Call{BoundedWeights}{H[B^{(t)}], \frac{\eps}{3K^2}, \frac{\delta}{3K^2}}$ \Comment{For each iteration $\ell$ the matrix is re-computed with fresh randomness. Also, replace $\frac{\delta}{3K^2}$ with $0$ in pure-DP case.}
        \EndFor
        Let $M^{(t)}$ be the entrywise median of $M^{(t, \ell)}$ over $1 \le \ell \le K$
    \EndFor
    
    \State Pick a uniformly random set $S \subset V$ of size $L$ \Comment{Construct hitting set}
    \State Create a weighted multigraph $\tilde{H}$ with vertex set $V$ and initially no edges
    
    \For{each pair $(u, v) \in E$} 
        \State add a red edge $(u, v) \in E$ to $\tilde{H}$ of weight $w_{u, v} + Lap(3K/\eps)$
    \EndFor
    
    \For{all pairs $u, v \in S$}
        \State add a blue edge $(u, v)$ to $\tilde{H}$ of weight $d(u, v) + Lap(10 K L \sqrt{\log \frac{3K}{\delta}}/\eps)$ \Comment{Use $Lap(10 K L^2 /\eps)$ if pure-DP is desired}
    \EndFor
    
    \For{each $B^{(t)}$}
        \For{all pairs $u, v \in B^{(t)}$}
            \State add a green edge $(u, v)$ to $\tilde{H}$ of weight $M^{(t)}[u, v]$
        \EndFor
    \EndFor
    
    \For{all pairs $u, v \in V$}
        \State Let $\tilde{D}^{(k)}[u, v]$ be the shortest path distance on $\tilde{H}$ from $u$ to $v$ that uses at most $100 R \log n+\frac{100 T}{R}$ red edges, at most $1$ blue edge, and at most $\frac{100T}{R}$ green edges.
    \EndFor

\EndFor
\State \textbf{Return} $\tilde{D}$, the entrywise median of $\tilde{D}^{(1)}, \dots, \tilde{D}^{(K)}$ \Comment{Entrywise median used to improve success probability}
\EndProcedure
\end{algorithmic}
\end{algorithm}

\subsection{Privacy Analysis} \label{subsec:bounded_privacy}
As the final distance estimates are functions of the red, blue, and green edges, we can bound the privacy of Algorithm~\ref{alg:smallweights} by separately considering the privacy of releasing each of the sets of colored edges. To simplify the privacy analysis, we start by considering a modified definition of adjacency, where we say that two weighted graphs $H = (V, E, W)$ and $H' = (V, E, W')$ with same unweighted graph topology are \emph{adjacent} if there is at most one differing weight $w_e \neq w_{e}'$ and $|w_e-w_{e}'| \le \Delta$, where $\Delta \le 1$ is some parameter. In this case, we will prove that \Call{BoundedWeights}{} is $(\eps \cdot \Delta, \delta)$-DP.

We will prove this via induction on the size $n = |V|$. For the base case of $n = 1$, no edges are released, so we have $(0, 0)$-DP in fact. When $n \ge 2$, assume that the algorithm chooses the parameter $T < n$. (Note that in fact we will set $T = \frac{n}{(A \eps n)^{\sqrt{2}-1}}$ or $T = \frac{n}{(A \eps n)^{(\sqrt{17}-3)/4}}$, see the commentary on Line 2 of Algorithm~\ref{alg:smallweights}.) We now proceed with the induction step.

\begin{proposition}
\label{prop:red-dp}
    Releasing the red edges is $(\eps \cdot \Delta/3K, 0)$-DP.
\end{proposition}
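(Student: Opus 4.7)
The plan is to recognize the set of red edges released in one outer iteration as literally the Laplace mechanism applied to the edge-weight vector, and then invoke its standard privacy guarantee. Fix one outer iteration $k \in \{1,\dots,K\}$. Inspecting the relevant lines of Algorithm~\ref{alg:smallweights}, everything that determines which pairs $(u,v)$ become red edges, and with what noise, depends only on the public topology $G$ and on independent randomness --- namely the peeling variables $r_t$, the chosen centers $v_t$, and the hitting set $S$. The private weights $W$ enter the red-edge output exclusively through the perturbed scalars $w_{u,v} + Lap(3K/\eps)$ written on line 24; the recursive calls that produce the matrices $M^{(t)}$ feed only into the green edges and are accounted for separately.

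Next I bound the sensitivity. Under the modified notion of adjacency introduced in Section~\ref{subsec:bounded_privacy}, $H$ and $H'$ share $V$ and $E$ and differ in at most one coordinate of the weight vector $(w_e)_{e \in E}$, by at most $\Delta$. Hence the $\ell_1$-sensitivity of the map $H \mapsto (w_e)_{e \in E}$ is at most $\Delta$. Each coordinate is independently perturbed with $Lap(3K/\eps)$ noise, so the standard Laplace mechanism guarantee recalled in Section~\ref{sec:prelim} yields that the per-iteration release is $(\Delta/(3K/\eps), 0) = (\eps \Delta/(3K), 0)$-DP, matching the proposition. (When the full collection of red edges across all $K$ outer iterations is required later in the overall budget, a single application of basic composition will convert this into an $(\eps \Delta/3, 0)$-DP bound.)

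The one thing to verify carefully is that no step other than the perturbed edge weights leaks information about $W$ into the red-edge output. This is immediate from the pseudocode: the while-loop that selects each $v_t$ uses only the current $G'$, the balls $B^{(t)} = B_{G'}(v_t, r_t)$ are defined via \emph{hop} distances in $G'$ (not weighted distances), and $S$ is uniform over $V$. Hence the red-edge release is a clean application of the Laplace mechanism, and there is no substantive obstacle --- the difficulty in the overall privacy proof of Algorithm~\ref{alg:smallweights} lies in the upcoming blue-edge proposition (which needs strong composition and the sensitivity of pairwise shortest-path distances) and in the green-edge proposition (which requires the inductive hypothesis together with the fact that the disjoint balls $B^{(t)}$ are chosen without examining $W$), not here.
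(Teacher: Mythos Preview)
Your proof is correct and follows the same approach as the paper: the red edges are exactly the Laplace mechanism applied to the edge-weight vector with noise parameter $3K/\eps$, and under the modified adjacency the $\ell_1$-sensitivity is $\Delta$, yielding $(\eps\Delta/3K,0)$-DP. The paper's own proof is a single sentence to this effect; your additional remarks verifying that the peeling, hitting set, and recursive calls do not leak $W$ into the red-edge output are valid elaborations that the paper leaves implicit.
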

\begin{proof}
As the model of privacy is with respect to changing edge weights and we are assuming the sensitivity is at most $\Delta$, releasing the red edges is $(\eps \cdot \Delta/3K, 0)$-DP via the Laplace mechanism.
\end{proof}

\begin{proposition}
\label{prop:blue-approxdp}
    In the approximate-DP setting, releasing the blue edges with $Lap(10KL\sqrt{\log \delta^{-1}}/\eps)$ noise is ($\eps \cdot \Delta/3K, \delta/3K$)-DP.
\end{proposition}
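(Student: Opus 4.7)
The plan is to treat the blue-edge release as a composition of $\binom{L}{2}$ Laplace mechanisms, one per pair $(u,v) \in S$, and then invoke strong composition (Theorem on strong composition) with an appropriately chosen $\delta'$.

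First I would verify the per-pair sensitivity. Under the modified adjacency notion where $H$ and $H'$ differ on a single edge by at most $\Delta$, the shortest path distance $d(u,v)$ is a function of $H$ with sensitivity at most $\Delta$: altering a single edge weight by $\Delta$ changes the length of any fixed path by at most $\Delta$, hence changes the minimum-length path between $u$ and $v$ by at most $\Delta$ as well. Therefore, by the Laplace mechanism, releasing a single blue edge $d(u,v) + Lap(T')$ with $T' = 10KL\sqrt{\log(3K/\delta)}/\eps$ is $(\eps_0, 0)$-DP, where
\[
\eps_0 \;=\; \frac{\Delta}{T'} \;=\; \frac{\eps \cdot \Delta}{10 K L \sqrt{\log(3K/\delta)}}.
\]

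Next I would apply strong composition across all $k := \binom{L}{2} \le L^2/2$ blue edges, taking $\delta' = \delta/(3K)$. The strong composition theorem then yields privacy parameters
\[
\eps_{\mathrm{tot}} \;\le\; \sqrt{2k \log(1/\delta')}\cdot \eps_0 \;+\; k \eps_0 (e^{\eps_0}-1), \qquad \delta_{\mathrm{tot}} \;\le\; \delta'.
\]
The first term is at most $\sqrt{L^2 \log(3K/\delta)} \cdot \eps_0 = L\sqrt{\log(3K/\delta)} \cdot \eps_0 = \eps \Delta / (10K)$. For the second term, since $\eps \le 1$, $\Delta \le 1$, and $K,L \ge 1$, we have $\eps_0 \le 1/10$, so $e^{\eps_0} - 1 \le 2\eps_0$, giving $k\eps_0(e^{\eps_0}-1) \le 2 k \eps_0^2 \le L^2 \eps_0^2$, which is dominated by $\eps\Delta/(10K)$ after plugging in the value of $\eps_0$ (and using that $\log(3K/\delta) \ge 1$). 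Summing, $\eps_{\mathrm{tot}} \le \eps \Delta/(3K)$ as required, while $\delta_{\mathrm{tot}} \le \delta/(3K)$.

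There is no deep obstacle here; the entire content is bookkeeping. The only point requiring a little care is that the constant in the Laplace scale is large enough to absorb both the $\sqrt{2k\log(1/\delta')}$ factor from strong composition and the lower-order $k\eps_0(e^{\eps_0}-1)$ term, while simultaneously allowing the $\delta'$ budgeted to strong composition to be charged against the $\delta/(3K)$ target. The choice $T' = 10KL\sqrt{\log(3K/\delta)}/\eps$ is comfortably large enough on both counts, concluding the induction step for blue edges.
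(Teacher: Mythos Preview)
Your proof is correct and follows essentially the same approach as the paper: bound the per-pair sensitivity of $d(u,v)$ by $\Delta$ under the modified adjacency, apply the Laplace mechanism to each blue edge, and then invoke strong composition across the $O(L^2)$ pairs with $\delta' = \delta/(3K)$. In fact you carry out the arithmetic more carefully than the paper, which simply asserts that strong composition over $L^2$ edges yields $(\eps\Delta/3K,\delta/3K)$-DP without explicitly checking the constants or the lower-order $k\eps_0(e^{\eps_0}-1)$ term.
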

\begin{proof}
The blue edges are noisy true shortest path distances between vertices in $S$. Normally, for any pair of vertices, the shortest path distance has sensitivity 1 (see the proof of Theorem~\ref{thm:unboundedpuredp}). However, because we are modifying the definition of adjacent, the shortest path distance now has sensitivity $\Delta$.
Via the Laplace mechanism, releasing a single blue edge is $(\Delta \cdot (10KL\sqrt{\log (3K/\delta)}/\eps)^{-1}, 0)$-DP.
Via strong composition, releasing all $L^2$ edges is $(\eps \cdot \Delta/3K, \delta/3K)$-DP.
\end{proof}

\begin{proposition}
\label{prop:blue-puredp}
    In the pure-DP setting, releasing the blue edges with $Lap(10KL^2/\eps)$ noise is $(\eps \cdot \Delta/10K, 0)$-DP.
\end{proposition}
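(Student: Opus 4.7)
The plan is to mirror the proof of Proposition~\ref{prop:blue-approxdp}, but substitute basic composition for strong composition, since in the pure-DP regime we cannot pay the usual $\sqrt{\log \delta^{-1}}$ advantage from strong composition and must absorb a full factor of the number of released quantities into the per-edge privacy budget.

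More concretely, the first step is to observe that, under the modified adjacency where $H$ and $H'$ differ on a single edge with weight change at most $\Delta$, the true shortest path distance $d(u,v)$ between any fixed pair $(u,v) \in S \times S$ has sensitivity at most $\Delta$ (this is exactly the sensitivity bound used already in Propositions~\ref{prop:red-dp} and \ref{prop:blue-approxdp}, stemming from the argument in the proof of Theorem~\ref{thm:unboundedpuredp}). Applying the Laplace mechanism with scale $T = 10KL^2/\eps$ to the scalar $d(u,v)$ therefore releases a single blue edge in a $\bigl(\Delta/T, 0\bigr) = \bigl(\eps \Delta/(10KL^2),\,0\bigr)$-DP manner.

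The second step is to combine the releases of all blue edges. Since at most $L^2$ pairs in $S \times S$ yield blue edges, basic composition gives overall privacy
\[
\Bigl(L^2 \cdot \tfrac{\eps \Delta}{10 K L^2},\, 0\Bigr) \;=\; \Bigl(\tfrac{\eps \Delta}{10K},\, 0\Bigr),
\]
which is exactly the claimed bound. (The fact that the hitting set $S$ itself is chosen independently of the edge weights means its selection does not contribute to the privacy loss; only the distances conditional on $S$ are revealed, and those are privatized by the Laplace noise.)

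I do not expect any real obstacle here: the statement is essentially a bookkeeping counterpart to Proposition~\ref{prop:blue-approxdp}, and the only thing to be careful about is choosing the Laplace scale so that $L^2$ invocations of basic composition land at the target budget $\eps \Delta/(10K)$; the scale $10 K L^2 / \eps$ in the statement is exactly calibrated for this. The proof will therefore be a two-sentence invocation of the Laplace mechanism and basic composition.
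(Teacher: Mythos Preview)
Your proposal is correct and follows essentially the same approach as the paper: bound the sensitivity of each $d(u,v)$ by $\Delta$, apply the Laplace mechanism with scale $10KL^2/\eps$ to get $(\eps\Delta/(10KL^2),0)$-DP per blue edge, and then use basic composition over the $L^2$ pairs to obtain $(\eps\Delta/(10K),0)$-DP. The paper's proof is the two-sentence version you anticipated.
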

\begin{proof}
    Via the Laplace mechanism, releasing a single blue edge is $O(\Delta \cdot (10KL^2/\eps)^{-1}, 0)$-DP.
    Via basic composition, releasing all $L^2$ edges is $(\eps \cdot \Delta/10K, 0)$-DP.
\end{proof}

\begin{proposition}
\label{prop:green-approxdp}
    In the approximate-DP setting, releasing the green edges is $(\eps \cdot \Delta/3K, \delta/3K)$-DP.
\end{proposition}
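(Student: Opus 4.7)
The plan is to establish $(\eps \Delta / 3K, \delta / 3K)$-DP for the green edges by induction on $|V|$, exploiting two structural features of the peeling step. First, the peeling procedure depends only on the public unweighted topology $G$ and the independent Laplace/exponential draws $r_t \sim \mathrm{Expo}(R)$, neither of which depends on the edge weights; so for two adjacent inputs $H, H'$ the sequence of balls $\{B^{(t)}\}$ is produced from exactly the same distribution and can be coupled to be identical. Second, by construction the balls $B^{(t)}$ are pairwise vertex-disjoint, so any edge $e \in E$ lies inside the induced subgraph $H[B^{(t)}]$ for at most one index $t$.

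Fix adjacent $H, H'$ differing in the weight of a single edge $e$ by at most $\Delta$, and condition on the peeling output (which we may do since it is independent of weights). Let $t^\ast$ denote the unique index, if it exists, with both endpoints of $e$ in $B^{(t^\ast)}$; if no such index exists there is nothing to prove, since then $H[B^{(t)}] = H'[B^{(t)}]$ for every $t$ and the joint distribution of the recursive outputs is identical. Otherwise, for every $t \ne t^\ast$ the induced subgraphs $H[B^{(t)}]$ and $H'[B^{(t)}]$ coincide, so the tuples $(M^{(t,1)}, \ldots, M^{(t,K)})$ have the same distribution on both inputs and contribute no privacy loss. For $t = t^\ast$, the induced subgraphs $H[B^{(t^\ast)}]$ and $H'[B^{(t^\ast)}]$ are adjacent in the same modified sense with sensitivity bound $\Delta$, and since the while-loop condition guarantees $|B^{(t^\ast)}| \le T < n$, the inductive hypothesis applies: each call $\Call{BoundedWeights}{H[B^{(t^\ast)}], \eps/(3K^2), \delta/(3K^2)}$ is $(\eps \Delta / (3K^2), \delta / (3K^2))$-DP.

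Since the $K$ repetitions indexed by $\ell = 1, \ldots, K$ use fresh independent randomness, basic composition over $\ell$ makes the tuple $(M^{(t^\ast, 1)}, \ldots, M^{(t^\ast, K)})$ $(\eps \Delta / (3K), \delta / (3K))$-DP. Independence of randomness across different balls means the full collection $\{M^{(t,\ell)}\}_{t,\ell}$ inherits the same $(\eps \Delta / 3K, \delta / 3K)$-DP guarantee, because only the $t^\ast$-coordinate of this product distribution is affected by the change from $H$ to $H'$. The green-edge weights are obtained from $\{M^{(t,\ell)}\}$ by taking entrywise medians to form each $M^{(t)}$ and then reading off entries indexed by pairs $u, v \in B^{(t)}$; this is deterministic post-processing and preserves differential privacy. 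Averaging (or rather, integrating) over the weight-independent peeling randomness completes the induction.

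The main subtlety to watch is the setup of the induction: one needs the strict inequality $|B^{(t)}| \le T < n$ (guaranteed since $T$ is chosen as $n/(A\eps n)^{c} < n$), the weight-independence of the peeling ensuring one can couple $\{B^{(t)}\}$ identically across $H$ and $H'$, and the vertex-disjointness that forces a changed edge into at most one ball. Once these are in place, the composition and post-processing steps are routine and mirror Propositions~\ref{prop:blue-approxdp} and \ref{prop:red-dp}.
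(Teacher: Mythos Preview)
Your argument is correct and follows essentially the same approach as the paper: couple the weight-independent peeling across adjacent inputs, observe that the single changed edge lies in at most one ball, apply the induction hypothesis to that ball, and then use basic composition over the $K$ inner repetitions followed by post-processing. One small inaccuracy: you write that ``the while-loop condition guarantees $|B^{(t^\ast)}| \le T$,'' but the while-loop only bounds $|B_{G'}(v_t, 100R\log n)|$, whereas $B^{(t)} = B_{G'}(v_t, r_t)$ with $r_t$ unbounded; the paper glosses over this in the same way (writing simply ``since $T < n$''), so your treatment matches theirs.
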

\begin{proof}
First,
consider a graph with no vertex $v$ s.t. $B_G(v, 100R\log n) \leq T$. In this case, there are no green edges released by the algorithm, so releasing the green edges is $(0,0)$-DP and the proposition is true.
    
Otherwise, suppose the algorithm creates balls $B^{(1, \ell)}, \dots, B^{(\tau, \ell)}$ at iteration $\ell$.
Note that the construction of the balls is independent of the edge weights, so there is no privacy loss in creating the balls.
Now, suppose $H, H'$ are adjacent graphs according to our new definition.
Then, at most one edge in all of the induced subgraphs $H[B^{(t, \ell)}]$ or $H'[B^{(t, \ell)}]$ across all $1 \le t \le \tau$ can differ, and by at most $\Delta$.
For the balls that have identical edge weights, the output distribution of the green edges are identical.
If there is some $t$ such that $H[B^{(t, \ell)}]$ or $H'[B^{(t, \ell)}]$ have different weights, we can use the induction hypothesis (since $T < n$) to say that the output of $H[B^{(t, \ell)}]$ is $(\eps \cdot \Delta/(3K^2), \delta/(3K^2))$-DP.
So overall, the outputs of $B^{(1, \ell)}, \dots, B^{(\tau, \ell)}$ together are still $(\eps \cdot \Delta/(3K^2), \delta/(3K^2))$-DP.
We repeat this procedure $K$ times since $\ell$ ranges from $1$ to $K$: by applying basic composition, we have that the green edges are $(\eps \cdot \Delta/(3K), \delta/(3K))$-DP.
\end{proof}

\begin{proposition}
\label{prop:green-puredp}
    In the pure-DP setting, releasing the green edges is $(\eps \cdot \Delta/3K, 0)$-DP.
\end{proposition}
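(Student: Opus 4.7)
The plan is to mirror the proof of Proposition \ref{prop:green-approxdp}, with the only substantive change being that the recursive call to \Call{BoundedWeights}{} is instantiated in the pure-DP regime (i.e., the replacement on Line 16 of Algorithm \ref{alg:smallweights} setting the nested $\delta$ to $0$), and the composition step uses basic composition throughout. As in the approximate case, I will argue by induction on $n = |V|$, with the trivial base case $n = 1$ where no green edges are released, and in the inductive step assume the proposition holds for all $n' < n$.

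For the inductive step, I would first dispense with the case that no vertex $v \in G$ satisfies $|B_G(v, 100 R \log n)| \le T$: in that case no balls are produced, no green edges are released, and releasing the green edges is trivially $(0,0)$-DP. Otherwise, fix any iteration $\ell \in \{1, \dots, K\}$. The sequence of balls $B^{(1,\ell)}, \dots, B^{(\tau,\ell)}$ is constructed using only the unweighted graph topology $G$ and fresh independent randomness ($v_t$ is chosen without looking at weights, and $r_t \sim Expo(R)$), so it is independent of the edge weights and contributes no privacy loss. Moreover, the balls peeled off are disjoint as subsets of $V$, so the induced subgraphs $H[B^{(t,\ell)}]$ are edge-disjoint.

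Now let $H, H'$ be adjacent under the modified definition, so they differ in at most one edge by at most $\Delta$. By edge-disjointness of the induced subgraphs, there is at most one index $t^\ast$ for which $H[B^{(t^\ast,\ell)}]$ and $H'[B^{(t^\ast,\ell)}]$ disagree, and they are adjacent (under the same modified definition with sensitivity $\Delta$) as weighted subgraphs on strictly fewer than $n$ vertices (since $T < n$). By the inductive hypothesis applied to the pure-DP version of \Call{BoundedWeights}{} called with privacy parameter $\eps/(3K^2)$, the output $M^{(t^\ast,\ell)}$ is $(\eps \cdot \Delta/(3K^2), 0)$-DP, while every other $M^{(t,\ell)}$ for $t \ne t^\ast$ has identical output distribution under $H$ and $H'$. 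Hence the joint collection of green-edge weights produced at iteration $\ell$ is $(\eps \cdot \Delta/(3K^2), 0)$-DP.

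Finally, iterations $\ell = 1, \dots, K$ are run with independent randomness, so by basic composition across these $K$ iterations, the totality of green edges released in one outer pass is $(\eps \cdot \Delta/(3K), 0)$-DP, as claimed. The only subtlety worth highlighting is that each outer repetition $k = 1, \dots, K$ of the main loop (Line 5) only peels balls and invokes the recursion with the already-reduced budgets $\eps/(3K^2)$ and $\delta/(3K^2)$; the outer $K$-fold composition across $k$ is handled separately when combining red, blue, and green privacy costs, so it does not enter here. No further obstacle arises beyond bookkeeping of the constants, since in the pure-DP regime we do not need strong composition at any step of this argument.
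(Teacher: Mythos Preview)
Your proposal is correct and takes essentially the same approach as the paper: the paper itself just states that the proof follows exactly as in Proposition~\ref{prop:green-approxdp} with the sole change that the recursive call uses parameters $\eps/(3K^2), 0$ instead of $\eps/(3K^2), \delta/(3K^2)$, and you have written out precisely that argument in full detail.
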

The proof follows exactly as the proof for approximate-DP with the sole change being that
we recursively apply \Call{BoundedWeights}{} with parameters $\eps/(3K^2), 0$ instead of $\eps/(3K^2), \delta/(3K^2)$.

\begin{theorem}
    In the approximate-DP setting, Algorithm~\ref{alg:smallweights} is $(\eps \cdot \Delta, \delta)$-DP, assuming the modified definition of adjacency.
\end{theorem}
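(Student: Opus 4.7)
The plan is to combine the three edge-color privacy propositions from Subsection~\ref{subsec:bounded_privacy} using basic composition, first within one outer iteration and then across the $K$ outer iterations, and to finish with the post-processing invariance of differential privacy. Since each iteration $k$ of the outer loop (Lines~5--26) produces only three sets of outputs that depend on the edge weights, namely the red, blue, and green edges, the matrix $\tilde{D}^{(k)}$ and then the final median $\tilde{D}$ are deterministic functions of the union of all such edges released across $k=1,\dots,K$. Thus it suffices to bound the privacy of releasing all red, blue, and green edges over all $K$ iterations.

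First, I would fix a single iteration $k$ and apply basic composition to Propositions~\ref{prop:red-dp}, \ref{prop:blue-approxdp}, and \ref{prop:green-approxdp}. Each of these gives $(\eps \cdot \Delta/(3K), \delta/(3K))$-DP (with $\delta$-parameter $0$ for the red edges), so the combined release in iteration $k$ is $(\eps \cdot \Delta/K,\, 2\delta/(3K))$-DP. The structural assumption I need to check for Proposition~\ref{prop:green-approxdp} to apply is that the recursion halts: since we assume $T < n$ (see the commentary on Line~2), the inductive hypothesis on $n$ can be invoked for every recursive call on $H[B^{(t,\ell)}]$, which is what justifies the green-edge bound. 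The choice of peeling balls and the hitting set $S$ depend only on $G$ and fresh randomness, so they introduce no additional privacy loss.

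Next, I would observe that the $K$ outer iterations of Line~5 use independent randomness and each one releases a set of edges that is $(\eps \cdot \Delta/K, 2\delta/(3K))$-DP by the previous step. Applying basic composition across the $K$ iterations yields an overall $(\eps \cdot \Delta,\, 2\delta/3)$-DP release of all colored edges produced during the algorithm. Since $2\delta/3 \le \delta$, this implies $(\eps \cdot \Delta, \delta)$-DP. Finally, I would note that the output of Algorithm~\ref{alg:smallweights} is the entrywise median of the matrices $\tilde{D}^{(1)}, \dots, \tilde{D}^{(K)}$, and each $\tilde{D}^{(k)}$ is computed purely from the released colored edges via shortest-path computations in $\tilde{H}$; hence the final release is a deterministic post-processing of a $(\eps \cdot \Delta, \delta)$-DP object and is itself $(\eps \cdot \Delta, \delta)$-DP.

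There is no real mathematical obstacle here; the only subtleties are bookkeeping ones. The first is making sure that the induction on $n$ is well-founded, which uses $T < n$ and the trivial base case when no edges are released. The second is being careful that the quantities $R, T, L, K$ used inside the recursive call on $H[B^{(t,\ell)}]$ are re-derived from the smaller graph rather than inherited, so that the recursive call really does satisfy the conclusion of the induction hypothesis with parameters $(\eps/(3K^2), \delta/(3K^2))$. Once these points are noted, the proof reduces to the two applications of basic composition and one invocation of post-processing described above.
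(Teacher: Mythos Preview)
Your proposal is correct and follows essentially the same approach as the paper: combine Propositions~\ref{prop:red-dp}, \ref{prop:blue-approxdp}, and \ref{prop:green-approxdp} via basic composition to get $(\eps\cdot\Delta/K,\,2\delta/(3K))$-DP per outer iteration, then compose across the $K$ iterations and invoke post-processing. Your added remarks about the well-foundedness of the induction and the re-derivation of parameters in recursive calls are sound bookkeeping points that the paper leaves implicit.
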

\begin{proof}
    Consider any $\tilde{D}^{(k)}$ for $k \in \{1, \ldots, K\}$. The privacy of releasing $\tilde{D}^{(k)}$ is at most the privacy of the releasing the red, blue, and green edges as $\tilde{D}^{(k)}$ is a function of those edges. By Propositions \ref{prop:red-dp}, \ref{prop:blue-approxdp}, and \ref{prop:green-approxdp}, releasing $\tilde{D}^{(k)}$ is $(\eps \cdot \Delta/K, 2\delta/3K)$-DP.
    The total privacy of the algorithm is that of releasing $\tilde{D}$ which is the median of all $k$ copies of $\tilde{D}^{(k)}$. By basic composition, the algorithm is $(\eps \cdot \Delta,\delta)$-DP, as required.
\end{proof}

\begin{theorem}
    In the pure-DP setting, Algorithm~\ref{alg:smallweights} is $(\eps \cdot \Delta, 0)$-DP, assuming the modified definition of adjacency.
\end{theorem}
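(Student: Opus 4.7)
The plan is to mirror the proof of the approximate-DP version, but use the pure-DP analogues of the colored-edge privacy propositions and use basic composition throughout (since $\delta=0$, no strong composition is needed). As in the approximate-DP case, for each outer iteration $k \in \{1, \dots, K\}$, the matrix $\tilde{D}^{(k)}$ is a deterministic post-processing of the red, blue, and green edges produced inside that iteration, so its privacy cost is bounded by the cost of releasing those three sets of edges.

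First, I would invoke Proposition \ref{prop:red-dp} to bound the red edges at $(\eps \cdot \Delta/3K, 0)$-DP; this proposition does not depend on whether we are in the pure- or approximate-DP regime. Next, I would use Proposition \ref{prop:blue-puredp} for the blue edges, which gives $(\eps \cdot \Delta/10K, 0)$-DP in the pure-DP setting (using the larger $Lap(10KL^2/\eps)$ noise rather than the approximate-DP noise scale). Finally, I would apply Proposition \ref{prop:green-puredp} to conclude that the green edges are $(\eps \cdot \Delta/3K, 0)$-DP; note that Proposition \ref{prop:green-puredp} already handles the recursive call to \Call{BoundedWeights}{} with parameters $(\eps/(3K^2), 0)$ and folds in the $K$ repetitions for the median of $M^{(t,\ell)}$ via basic composition, so the induction hypothesis (in its pure-DP form) plugs in cleanly.

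Combining these three bounds by basic composition, each $\tilde{D}^{(k)}$ is $(\eps \cdot \Delta/K, 0)$-DP (the constants $1/3K + 1/10K + 1/3K < 1/K$ comfortably fit). Then, since $\tilde{D}$ is the entrywise median of the $K$ independent matrices $\tilde{D}^{(1)}, \dots, \tilde{D}^{(K)}$, it is a post-processing of their concatenation, so one more application of basic composition across $k = 1, \dots, K$ yields the claimed $(\eps \cdot \Delta, 0)$-DP guarantee.

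The main obstacle is essentially bookkeeping rather than a new idea: one must verify that the induction on $n$ still goes through in the pure-DP case (the base case $n=1$ is trivial since no edges are released, and for $n \ge 2$ we have $T < n$, so the recursive call in Proposition \ref{prop:green-puredp} is on a strictly smaller instance), and that the constants $1/(3K), 1/(10K), 1/(3K)$ from the three propositions sum to at most $1/K$ so that the final composition produces exactly $\eps \cdot \Delta$ rather than a slightly larger constant multiple. Both are straightforward given the propositions already established.
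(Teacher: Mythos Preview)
Your proposal is correct and follows essentially the same argument as the paper: invoke Propositions \ref{prop:red-dp}, \ref{prop:blue-puredp}, and \ref{prop:green-puredp} to get $(\eps\cdot\Delta/K,0)$-DP for each $\tilde{D}^{(k)}$ via basic composition, then compose over the $K$ outer iterations. You have simply spelled out more of the bookkeeping (the constant check $1/3K+1/10K+1/3K<1/K$, the post-processing remark, and the induction base case) than the paper's two-line proof does.
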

\begin{proof}
    For any fixed $k$, the privacy of releasing $\tilde{D}^{(k)}$ is $(\eps \cdot \Delta/K, 0)$ private by Propositions \ref{prop:red-dp}, \ref{prop:blue-puredp}, and \ref{prop:green-puredp}.
    By basic composition over all $K$ copies, the algorithm is $(\eps \cdot \Delta,0)$-DP, as required.
\end{proof}

To finish, we return to the original definition of adjacency, where we say that two graphs $H = (V, E, W), H' = (V, E, W')$ are adjacent if $\sum_{e \in E} |w_e-w_{e'}| \le 1$. Let $m = |E|$, fix some arbitrary ordering $e_1, \dots, e_m$ of the edges in $E$, and consider the following sequence of hybrid graphs, where $H^{(0)} = H$, $H^{(m)} = H'$, and $H^{(i)}$ has the first $i$ edges with weight $w_e$ and the last $m-i$ edges with weight $w_{e'}$. For each $i \ge 1$, let $\Delta_i = |w_{e_i}-w_{e_i'}|$, meaning that $H^{(i)}, H^{(i-1)}$ are adjacent under our modified definition of adjacency if we set $\Delta = \Delta_i$.
Under the approximate-DP setting, we have that for any subset $S$ of the output, $\BP(\Call{BoundedWeights}{H^{(i)}, \eps, \delta} \in S) \le e^{\eps \cdot \Delta_i} \cdot \BP(\Call{BoundedWeights}{H^{(i-1)}, \eps, \delta} \in S) + \delta$. We can inductively apply this across all $i$ from $1$ to $m$ to get that if $H, H'$ are adjacent under the original definition,
\begin{align*}
    \BP(\Call{BoundedWeights}{H', \eps, \delta} \in S) &\le e^{\eps \cdot \sum_{i = 1}^{m} \Delta_i} \cdot \BP(\Call{BoundedWeights}{H, \eps, \delta} \in S) + m \cdot \delta \cdot e^{\eps \cdot \sum_{i = 1}^{m} \Delta_i} \\
    &\le e^{\eps} \cdot \BP(\Call{BoundedWeights}{H, \eps, \delta} \in S) + e^{\eps} \cdot n^2 \cdot \delta.
\end{align*}
Under the pure-DP setting, we have that for any subset $S$ of the output, $\BP(\Call{BoundedWeights}{H^{(i)}, \eps, 0} \in S) \le e^{\eps \cdot \Delta_i} \cdot \BP(\Call{BoundedWeights}{H^{(i-1)}, \eps, 0} \in S)$. We can inductively apply this across all $i$ from $1$ to $m$ to get that if $H, H'$ are adjacent under the original definition,
\begin{align*}
    \BP(\Call{BoundedWeights}{H', \eps, 0} \in S) &\le e^{\eps \cdot \sum_{i = 1}^{m} \Delta_i} \cdot \BP(\Call{BoundedWeights}{H, \eps, 0} \in S)\\
    &\leq e^{\eps} \cdot \BP(\Call{BoundedWeights}{H, \eps, 0} \in S).
\end{align*}

Hence, we have the following theorem, establishing privacy.

\begin{theorem}
    In the approximate-DP setting, Algorithm~\ref{alg:smallweights} is $(\eps, 3n^2 \delta)$-DP, and in the pure-DP setting, Algorithm~\ref{alg:smallweights} is $(\eps, 0)$-DP.
\end{theorem}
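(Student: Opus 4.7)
The plan is to observe that the displayed hybrid-argument calculations immediately preceding the theorem statement already do essentially all the work, so the proof is a one-line wrap-up in each regime. First I would recall the two theorems under the modified adjacency notion, which state that for any $0 < \Delta \le 1$ the algorithm is $(\eps\Delta, \delta)$-DP (resp.\ $(\eps\Delta, 0)$-DP) whenever the two inputs differ on a single edge by at most $\Delta$. Then I would set up the hybrid chain $H = H^{(0)}, H^{(1)}, \dots, H^{(m)} = H'$ over an arbitrary ordering $e_1,\dots,e_m$ of the (at most $m \le \binom{n}{2} \le n^2$) edges in $E$, where $H^{(i)}$ agrees with $H'$ on the first $i$ edges and with $H$ on the rest. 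Writing $\Delta_i := |w_{e_i} - w_{e_i'}|$, the hypothesis $\|W-W'\|_1 \le 1$ gives $\sum_{i=1}^m \Delta_i \le 1$, and each consecutive pair $(H^{(i-1)}, H^{(i)})$ is adjacent under the modified notion with parameter $\Delta_i$.

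Next I would iterate the modified-adjacency DP guarantee along the hybrid, exactly as displayed above the theorem. In the approximate-DP setting this yields, for every measurable output set $S$,
\[
\BP(\Call{BoundedWeights}{H',\eps,\delta}\in S) \le e^{\eps \sum_i \Delta_i} \BP(\Call{BoundedWeights}{H,\eps,\delta}\in S) + m \cdot e^{\eps \sum_i \Delta_i}\cdot \delta,
\]
and using $\sum_i \Delta_i \le 1$ and $m \le n^2$ together with $\eps \le 1$ (so $e^\eps \le 3$) collapses this to $(\eps, 3n^2\delta)$-DP. In the pure-DP case the $\delta$ term vanishes at every step of the hybrid, so the same telescoping gives $\BP(\Call{BoundedWeights}{H',\eps,0}\in S) \le e^{\eps \sum_i \Delta_i}\BP(\Call{BoundedWeights}{H,\eps,0}\in S) \le e^\eps \BP(\Call{BoundedWeights}{H,\eps,0}\in S)$, which is the claimed $(\eps, 0)$-DP bound.

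Since the two earlier theorems under the modified adjacency definition are taken as given and the hybrid algebra is already written out verbatim in the text just before the statement, there is no real obstacle; the only thing to be careful about is bounding the accumulated $\delta$ by $3n^2\delta$ rather than $n^2 \delta$, which requires keeping the $e^{\eps \sum \Delta_i} \le e$ factor from the hybrid expansion and using $e \le 3$. The proof is therefore a short wrap-up that simply invokes the two modified-adjacency theorems, instantiates the hybrid chain, telescopes, and plugs in $\sum_i \Delta_i \le 1$ and $m \le n^2$.
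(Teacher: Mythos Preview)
Your proposal is correct and matches the paper's own proof essentially verbatim: the paper likewise treats the theorem as an immediate consequence of the hybrid calculation displayed just before it, together with the assumption $\eps \le 1$ so that $e^{\eps} \le e \le 3$. There is nothing to add.
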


\begin{proof}
    This is immediate from the calculations done before and the definition of privacy, as well as the fact that we assume $\eps \le 1$ so $e^\eps \le e \le 3.$
\end{proof}

While the privacy dependence on $\delta$ is slightly worse, we can replace $\delta$ with $\delta' = \delta/(3n^2)$: as our final error dependence on $\delta$ will only be $\sqrt{\log \delta^{-1}}$ this replacement will blow up our error by a $\sqrt{\log n}$ factor at worst.

\subsection{Accuracy Analysis} \label{subsec:bounded_accuracy}

In this section, we focus on a single iteration $k$ of the outer loop (line $5$). It suffices to show that for any fixed pair $u, v \in V$, we estimate the shortest path $d(u, v)$ up to low additive error with probability at least $2/3$. In this case, by running the algorithm $K = 100 \log n$ times and outputting the median answer for each shortest path query, a simple Hoeffding bound shows that we would successfully solve all-pairs shortest path distances up to the same error with high probability.

We define $f(n, A, \eps, \delta)$ to be an upper bound for the error of each shortest path if our algorithm is $(\eps, \delta)$-DP, and if all edges are promised to be in the range $(0, A]$. We will prove a recursive bound for $f(n, A, \eps, \delta)$, and then use this to construct an explicit bound.

\begin{proposition} \label{prop:r_small}
    With probability at least $0.99$, every sampled $r_t$ is at most $100 R \log n$.
\end{proposition}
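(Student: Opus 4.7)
The plan is to bound the failure probability by a straightforward union bound using the exponential tail and the trivial bound on the number of peeling iterations.

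First I would recall that each $r_t$ is drawn from $Expo(R)$, so by the standard tail bound of the exponential distribution, $\BP(r_t > 100 R \log n) = e^{-100 \log n} = n^{-100}$ for any single $t$.

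Next I would bound the total number of peeling iterations. Each iteration of the while loop on line $8$ selects a vertex $v_t$ that remains in $G'$ and peels off at least the set $B^{(t)} = B_{G'}(v_t, r_t)$, which contains at least the vertex $v_t$ itself. Therefore $|G'|$ strictly decreases by at least one each iteration, and the total number of peeling steps is at most $|V| = n$.

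Finally, I would take a union bound over these (at most) $n$ iterations: the probability that there exists any $t$ with $r_t > 100 R \log n$ is at most $n \cdot n^{-100} = n^{-99} \le 0.01$ for $n \ge 2$ (for $n = 1$, no iterations occur and the statement is vacuous). This gives the claimed success probability of at least $0.99$. There is no real obstacle here; the only mild subtlety is justifying that the number of iterations is deterministically bounded (which follows because each iteration removes at least $v_t$ from $G'$), so that the union bound is valid without any circular dependence on the $r_t$ values.
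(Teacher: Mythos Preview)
Your proof is correct and essentially identical to the paper's own argument: exponential tail bound $e^{-100\log n}=n^{-100}$ for a single $r_t$, at most $n$ peeling iterations since each removes at least one vertex, then a union bound giving failure probability at most $n\cdot n^{-100}\le 0.01$.
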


\begin{proof}
    Note that each sampled $r_t$ is more than $100 R \log n$ with probability at most $e^{-100 \log n} = n^{-100}$. Since each ball $B^{(t)}$ peels off at least one vertex, there are at most $n$ balls, so the probability that any $r_t$ is more than $100 R \log n$ is, by a union bound, at most $n \cdot n^{-100} \le 0.01$.
\end{proof}

\begin{proposition} \label{prop:red_accuracy}
    With probability at least $0.99$, every red edge $(u, v) \in E$ has the weight $w_{u, v}$ up to error $\frac{10K \log n}{\eps}.$
\end{proposition}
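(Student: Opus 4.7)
The plan is to apply a standard tail bound for the Laplace distribution to each red edge individually and then take a union bound over all red edges. Each red edge has weight $w_{u,v} + Z_{u,v}$ where $Z_{u,v} \sim Lap(3K/\eps)$, so the absolute error on that single edge equals $|Z_{u,v}|$, and we must show that with probability at least $0.99$ none of the $|Z_{u,v}|$ exceeds $10K\log n/\eps$.

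First I would recall that for $Z \sim Lap(T)$ and any $s > 0$, $\BP(|Z| \ge s) \le e^{-s/T}$. Applying this with $T = 3K/\eps$ and $s = 10K\log n/\eps$ gives
\[
\BP\bigl(|Z_{u,v}| \ge 10K\log n/\eps\bigr) \le e^{-(10K\log n/\eps)/(3K/\eps)} = e^{-(10/3)\log n} = n^{-10/3}.
\]
Next I would union bound over all red edges: since the number of red edges equals $|E| \le \binom{n}{2} < n^2$, the probability that some $|Z_{u,v}|$ exceeds the threshold is at most $n^2 \cdot n^{-10/3} = n^{-4/3}$, which is well below $0.01$ for all sufficiently large $n$ (and the small-$n$ case can be absorbed into the constant in the bound or handled trivially).

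There is no real obstacle here; this is a routine application of the Laplace tail bound plus a union bound, and it does not depend on any of the structural properties of the peeling procedure or the hitting set. The only subtlety worth flagging is that the constants $3$ and $10$ in the noise scale and error bound were chosen precisely so that the exponent $10/3$ beats the $2$ coming from the number of edges with room to spare, which is what lets the union bound succeed with constant probability.
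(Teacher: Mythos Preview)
Your proposal is correct and essentially identical to the paper's proof: both apply the Laplace tail bound $\BP(|Z|\ge s)\le e^{-s/T}$ with $T=3K/\eps$ and $s=10K\log n/\eps$ to get a per-edge failure probability of $n^{-10/3}$, then union bound over at most $n^2$ red edges to obtain $n^{-4/3}\le 0.01$.
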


\begin{proof}
    Since we add Laplace noise $Lap(3K/\eps)$ to each edge, the probability that this noise exceeds $10 K \log n/\eps$ in absolute value is at most $e^{-10/3 \cdot \log n} = n^{-10/3}$. Since there are $|E| \le n^2$ total red edges in $\tilde{H}$, the probability that any red edge is off by more than $\frac{10K \log n}{\eps}$ is, by a union bound, at most $n^2 \cdot n^{-10/3} = n^{-4/3} \le 0.01$.
\end{proof}

\begin{proposition} \label{prop:blue_accuracy}
    With probability at least $0.99$, every blue edge $(u, v) \in S \times S$ has weight which equals the distance $d(u, v)$ up to error $\frac{30 K L \sqrt{\log \frac{3K}{\delta}} \log n}{\eps}$ in the approximate-DP setting, and up to error $\frac{30 KL^2 \log n}{\eps}$ in the pure-DP setting.
\end{proposition}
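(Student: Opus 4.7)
The plan is to mimic the proof of Proposition \ref{prop:red_accuracy}: this is a routine combination of a Laplace tail bound with a union bound over all blue edges. By construction, every blue edge $(u,v)$ with $u,v \in S$ has weight equal to $d(u,v)$ plus an independent Laplace sample, whose scale parameter is $b := 10KL\sqrt{\log(3K/\delta)}/\eps$ in the approximate-DP setting and $b := 10KL^2/\eps$ in the pure-DP setting. In both cases, the claimed target error is exactly $3 b \log n$, so the task reduces to bounding $\Pr(|\mathrm{Lap}(b)| > 3b \log n)$ and then union-bounding over all blue edges.

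First I would recall the standard Laplace tail bound: for $X \sim \mathrm{Lap}(b)$, $\Pr(|X| > t) = e^{-t/b}$. Plugging in $t = 3b \log n$ gives $\Pr(|X| > 3b\log n) = e^{-3\log n} = n^{-3}$. Since $S$ has size $L \le n$, there are at most $L^2 \le n^2$ blue edges, so a union bound yields failure probability at most $n^2 \cdot n^{-3} = n^{-1} \le 0.01$ in both the approximate-DP and the pure-DP settings.

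I would then conclude by observing that, conditioned on every blue-edge noise having magnitude at most $3b\log n$, the weight of each blue edge $(u,v)$ differs from the true distance $d(u,v)$ by at most $3b\log n$, which is exactly $\tfrac{30KL\sqrt{\log(3K/\delta)}\log n}{\eps}$ in the approximate-DP setting and $\tfrac{30KL^2\log n}{\eps}$ in the pure-DP setting. There is no real obstacle here; the only thing worth being careful about is that the blue-edge noises are drawn independently of the true distances $d(u,v)$ (which are deterministic functions of $H$) and independently across pairs in $S \times S$, which is exactly what the algorithm prescribes, so the independence needed for the union bound is immediate.
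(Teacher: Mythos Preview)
Your proposal is correct and essentially identical to the paper's own proof: both apply the Laplace tail bound $\Pr(|\mathrm{Lap}(b)|>3b\log n)=e^{-3\log n}=n^{-3}$ to each blue edge and then union bound over the at most $L^2\le n^2$ pairs in $S\times S$ to get total failure probability $n^{-1}\le 0.01$, treating the approximate-DP and pure-DP scale parameters separately.
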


\begin{proof}
    We start with the approximate-DP setting. Since we add Laplace noise $Lap(10 K L \sqrt{\log \frac{3K}{\delta}}/\eps)$ to each edge, the probability that this noise exceeds $30 K L \sqrt{\log \frac{3K}{\delta}} \log n/\eps$ in absolute value is at most $e^{-3 \cdot \log n} = n^{-3}$. Since there are at most $n^2$ total blue edges in $\tilde{H}$, the probability that any blue edge is off by more than $\frac{30 K L \sqrt{\log \frac{3K}{\delta}} \log n}{\eps}$ is, by a union bound, at most $n^2 \cdot n^{-3} = n^{-1} \le 0.01$.
    
    In the pure-DP setting, we add Laplace noise $\frac{10 KL^2}{\eps}$, which is, with probability at least $1-n^{-3}$, at most $\frac{30 KL^2 \log n}{\eps}$. Since there are at most $n^2$ total blue edges in $\tilde{H}$, the probability that any blue edge is off by more than $\frac{30 K L \sqrt{\log \frac{3K}{\delta}} \log n}{\eps}$ is, by a union bound, at most $n^2 \cdot n^{-3} = n^{-1} \le 0.01$.
\end{proof}

\begin{proposition} \label{prop:green_accuracy}
    Let $f(T, A, \eps, \delta)$ be some arbitrary positive function, and assume that Algorithm \ref{alg:smallweights} is accurate up to error $f(T, A, \eps, \delta)$ for all choices of $A, \eps, \delta$ and all $T < n$. In other words, with probability at least $2/3$, if Algorithm \ref{alg:smallweights} is run on a graph with $T$ vertices, it outputs all-pairs shortest path distances up to maximum additive error $f(T, A, \eps, \delta)$.
    Then, with probability at least $0.98$, every green edge $(u, v)$, where $u, v \in B^{(t)}$ for some $t$, has weight which equals the shortest path distance from $u$ to $v$ among paths entirely contained in $B^{(t)}$, up to error $f(T, A, \frac{\eps}{3K^2}, \frac{\delta}{3K^2})$.
\end{proposition}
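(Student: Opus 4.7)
The plan is to prove Proposition \ref{prop:green_accuracy} via a two-step argument: first condition on the event of Proposition \ref{prop:r_small} so that every ball $B^{(t)}$ has size at most $T$, allowing us to invoke the inductive assumption on each recursive call; then use a median-of-$K$ amplification to drive the per-ball failure probability low enough that a union bound over all balls leaves only $0.01$ slack, which together with the $0.01$ slack from Proposition \ref{prop:r_small} yields the claimed $0.98$ success probability.

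Concretely, I would first condition on the event $\mathcal{E}$ that $r_t \le 100 R \log n$ for every peeling step $t$, which has probability at least $0.99$ by Proposition \ref{prop:r_small}. Under $\mathcal{E}$, $B^{(t)} = B_{G'}(v_t, r_t) \subseteq B_{G'}(v_t, 100 R \log n)$, so the while-loop condition forces $|B^{(t)}| \le T < n$. The inductive hypothesis then applies to every recursive call $M^{(t, \ell)} = \Call{BoundedWeights}{H[B^{(t)}], \eps/(3K^2), \delta/(3K^2)}$: with probability at least $2/3$, its maximum additive error against the true shortest path distances within $H[B^{(t)}]$ is at most $f(|B^{(t)}|, A, \eps/(3K^2), \delta/(3K^2)) \le f(T, A, \eps/(3K^2), \delta/(3K^2))$ (treating $f$ as monotonically nondecreasing in its first argument, which we may assume without loss of generality). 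Crucially, because each of the $K$ runs uses fresh randomness, these success events are mutually independent across $\ell$, and also independent of the peeling randomness defining $\mathcal{E}$.

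Finally, I would apply median amplification to each ball. Call a copy $M^{(t, \ell)}$ \emph{good} if all of its entries fall within the above error bound simultaneously. Each copy is good with probability at least $2/3$ independently, so a Chernoff bound yields that strictly more than $K/2$ copies are good except with probability $e^{-\Omega(K)}$. In that case, for every pair $(u, v) \in B^{(t)} \times B^{(t)}$ a strict majority of the $K$ scalars $M^{(t, \ell)}[u, v]$ lie in the interval $[d_{H[B^{(t)}]}(u, v) - f, d_{H[B^{(t)}]}(u, v) + f]$, which forces the entrywise median $M^{(t)}[u, v]$ into that interval and hence makes every green edge from $B^{(t)}$ accurate. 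Since each peeling step removes at least one vertex, there are at most $n$ balls, and a union bound caps the total bad-median probability by $n \cdot e^{-\Omega(K)}$; for $K = 100 \log n$ with sufficiently large hidden constants, this is at most $0.01$. Combined with the $0.01$ slack from $\mathcal{E}^c$, the overall success probability is at least $0.98$. The only real subtlety is the conditioning and independence bookkeeping --- that $\mathcal{E}$ only depends on the peeling randomness and can therefore be disentangled from the recursive calls' success events; once this is established, the Chernoff-plus-union-bound steps are entirely routine.
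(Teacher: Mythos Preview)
Your proposal is correct and follows essentially the same approach as the paper: condition on Proposition~\ref{prop:r_small} so that every ball has at most $T$ vertices, invoke the inductive hypothesis on each recursive call, amplify via the median of $K$ independent copies, and finish with a union bound. The only cosmetic difference is that you amplify per ball (using the all-pairs guarantee of each copy and then union bounding over at most $n$ balls), whereas the paper amplifies per pair and union bounds over at most $n^2$ pairs; both variants yield the claimed $0.98$ success probability.
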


\begin{proof}
    We assume that every sampled $r_t$ is at most $100 R \log n$ (which by Proposition \ref{prop:r_small} occurs with at least $0.99$ probability). 
    
    In this case, every ball $B^{(t)}$ has at most $T$ vertices. So, if we run $\Call{BoundedWeights}{H[B^{(t)}], \frac{\eps}{3K^2}, \frac{\delta}{3K^2}}$, the output matrix $M^{(t, \ell)}$ will be accurate for any fixed pair $(u, v) \in B^{(t)}$ up to error $f\left(T, A, \frac{\eps}{3K^2}, \frac{\delta}{3K^2}\right)$ with probability at least $\frac{2}{3}$, since $B^{(t)}$ has at most $T$ vertices. So, by Hoeffding's inequality, by taking the entrywise median of $K = 100 \log n$ iterations, we have that $M^{(t)}$ will be accurate for any fixed pair $(u, v) \in B^{(t)}$ up to error $f\left(T, A, \frac{\eps}{3K^2}, \frac{\delta}{3K^2}\right)$, with probability at least $1-n^{-5}$. Since each pair $(u, v)$ can be in at most one ball $B^{(t)}$, we have that $M^{(t)}$ will be accurate for all $(u, v) \in B^{(t)}$ and all $t$ with probability at least $1-n^{-3} \ge 0.99$.
    
    Since we required that every sampled $r_t$ was at least $100 R \log n$, the overall probability of success is still at least $0.98$.
\end{proof}

Now, for every $u, v \in V$, we fix an arbitrary shortest path from $u$ to $v$ in the \emph{weighted} graph $H$. 


\begin{proposition} \label{prop:shortest_path_structure}
    Let the shortest path from $u$ to $v$ in $H$ be $\mathcal{P} = (p_0 = u, p_1, \dots, p_Q = v)$. Then with probability at least $0.75$, either:
\begin{enumerate}
    \item Every vertex on the path is in some removed ball $B^{(t)}$ and the number of $i < Q$ such that 
    $p_i, p_{i+1}$ are in different balls
    is at most $\frac{40 T}{R}$.
    \item There exist indices $i \le j$  with the following properties. We have $h(p_i, S) \le 100 R \log n$; for all $i' < i$, $p_{i'}$ is in some removed ball; and the number of $i' < i$ such that $p_{i'}, p_{i'+1}$ are in different balls is at most $\frac{40 T}{R}$. Also, $h(p_j, S) \le 100 R \log n$; for all $j' > j$, $p_{j'}$ is in some removed ball; and the number of $j' > j$ such that $p_{j'}, p_{j'-1}$ are in different balls is at most $\frac{40 T}{R}$.
\end{enumerate}
\end{proposition}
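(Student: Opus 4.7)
The plan is to split on the length $Q$ of the fixed shortest path $\mathcal{P}$, and combine two ingredients: the hitting-set guarantees from Propositions \ref{prop:peeling_hitting_set} and \ref{prop:peeling_hitting_set_2}, which locate endpoints $p_i$ (and $p_j$) close to $S$ after peeling, together with Lemma \ref{lem:martingale_peeling}, which bounds the expected number of color-change pairs on any fixed path under the peeling procedure. The algorithm's peeling loop coincides with the setting of Subsection \ref{subsec:peeling}, so all three results apply directly. Throughout, I would think of ``$p_i, p_{i+1}$ in different balls'' as the color-change event from the coloring procedure, with the unpeeled state being color $0$.

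First, when $Q \ge 2T$, Proposition \ref{prop:peeling_hitting_set} provides, with failure probability at most $2n^{-100}$, indices $i \le T$ and $j \ge Q - T$ satisfying the endpoint requirements for case 2 of the statement. To control color changes on the resulting random sub-paths $(p_0, \ldots, p_i)$ and $(p_j, \ldots, p_Q)$, I would apply Lemma \ref{lem:martingale_peeling} to the \emph{fixed} sub-paths $\mathcal{P}_L = (p_0, \ldots, p_T)$ and $\mathcal{P}_R = (p_{Q-T}, \ldots, p_Q)$, each of length $T$. The lemma gives expected color-change count at most $4T/R$ on each, so Markov's inequality yields tail probability at most $0.1$ for each. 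Monotonicity of the color-change count under taking sub-paths (justified by $i \le T$, so $(p_0, \ldots, p_i)$ is a prefix of $\mathcal{P}_L$, and symmetrically for the right end) transfers the bound to the sub-paths appearing in the proposition. A union bound gives success probability at least $1 - 2n^{-100} - 0.1 - 0.1 \ge 0.75$.

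Next, when $Q < 2T$, Proposition \ref{prop:peeling_hitting_set_2} produces either the fully-peeled configuration (corresponding to case 1 of the statement) or indices $i \le j$ meeting the endpoint conditions of case 2, with failure probability at most $2n^{-100}$. Applying Lemma \ref{lem:martingale_peeling} to the whole path $\mathcal{P}$ gives $\BE[X] \le 4Q/R < 8T/R$, so Markov's inequality bounds $\BP(X \ge 40T/R) \le 0.2$. Since both the whole-path color-change count (needed for case 1) and the sub-path counts on $(p_0, \ldots, p_i)$ and $(p_j, \ldots, p_Q)$ (needed for case 2) are dominated by $X$, the threshold $40T/R$ holds in either sub-case with probability at least $0.8$; combining with the hitting-set event gives success probability at least $0.75$.

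The main subtlety is aligning the martingale tail bound, which requires a \emph{fixed} path, with the \emph{random} sub-paths produced by the hitting-set propositions. I would handle this by inflating the sub-paths to the fixed deterministic lengths allowed by the index bounds $i \le T$ and $Q - j \le T$ in the long case, and by bounding sub-path counts by the whole-path count in the short case. A minor additional point is that the hitting set $S$ is sampled independently of the peeling randomness, so the failure events of Propositions \ref{prop:peeling_hitting_set}/\ref{prop:peeling_hitting_set_2} and of the Markov bounds combine cleanly under a union bound.
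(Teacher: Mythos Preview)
Your proposal is correct and follows essentially the same approach as the paper's proof: both split into the cases $Q \ge 2T$ and $Q < 2T$, apply Lemma~\ref{lem:martingale_peeling} to the fixed length-$T$ prefix and suffix (resp.\ the whole path), use Markov's inequality with threshold $40T/R$, and combine with Proposition~\ref{prop:peeling_hitting_set} (resp.\ Proposition~\ref{prop:peeling_hitting_set_2}) via a union bound. The only point you leave implicit that the paper states explicitly is that $i \le T \le Q - T \le j$ forces $i \le j$ in the long-path case.
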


\begin{proof}
    First, suppose that $Q < 2T$. In this case, by Lemma \ref{lem:martingale_peeling}, the expected number of $i < Q$ such that either $p_i, p_{i+1}$ are in different balls or exactly one of $p_i, p_{i+1}$ has been peeled off is at most $\frac{8T}{R}$. So, by Markov's inequality, this number is at most $\frac{40T}{R}$ with probability at least $0.8$. Let $\mathcal{E}$ represent this event.
    
    Now, assume that event $\mathcal{E}$ holds. If every vertex on $\mathcal{P}$ is in some removed ball, we are done. Otherwise, we apply Proposition \ref{prop:peeling_hitting_set_2} to say that with failure probability at most $2n^{-100}$, there exist indices $i \le j$ such that $h(p_i, S) \le 100 R \log n$ and $h(p_j, S) \le 100 R \log n$, and that every index $i' < i$ or $j' > j$ has $p_{i'}, p_{j'}$ peeled off. In addition, if event $\mathcal{E}$ holds, then the number of $i' < i$ such that $p_{i'}, p_{i'+1}$ or or $j' > j$ such that $p_{j'}, p_{j'-1}$ are in different balls is at most $\frac{40 T}{R}.$ The overall failure probability requires either $\mathcal{E}$ to not hold, or Proposition \ref{prop:peeling_hitting_set_2} to not hold, which is at most $0.2 + 2n^{-100} \le 0.25$. This concludes the proof when $Q < 2T$.
    
    Next, suppose that $Q \ge 2T$. In this case, we consider $\mathcal{P}_{\text{left}}$ to be the path just consisting of $p_0, \dots, p_{T}$, and $\mathcal{P}_{\text{right}}$ to be the path just consisting of $p_{Q-T}, \dots, p_{Q}$. We again use Lemma \ref{lem:martingale_peeling} to say the expected number of $i < T$ such that $p_i, p_{i+1}$ are in different balls or exactly one of $p_i, p_{i+1}$ has been peeled off is at most $\frac{4T}{R}$. Similarly, we use Lemma \ref{lem:martingale_peeling} to say the expected number of $i > Q-T$ such that $p_i, p_{i-1}$ are in different balls or exactly one of $p_i, p_{i-1}$ has been peeled off is at most $\frac{4T}{R}$. So, by Markov's inequality, each of these numbers is at most $\frac{40 T}{R}$ with probability at least $0.9$, so both are at most $\frac{40 T}{R}$ with probability at least $0.8$. Let $\mathcal{E}'$ represent this event.
    
    We can apply Proposition \ref{prop:peeling_hitting_set} to say that with failure probability at most $n^{-100}$, there is an index $i \le T$ such that $h(p_i, S) \le 100 R \log n$ and every vertex $p_{i'}$ for $i' < i$ has been peeled off. Likewise, with failure probability $n^{-100}$, there is an index $j \ge Q-T$ such that $h(p_j, S) \le 100 R \log n$ and every vertex $p_{j'}$ for $j' > j$ has been peeled off. Since $Q \ge 2T$, we have that $j \ge i$. So, with failure probability at most $0.2 + 2n^{-100} \le 0.25$, both $\mathcal{E}'$ and the two events described in the previous sentences hold. This concludes the proof when $Q \ge 2T$.
\end{proof}

\begin{lemma} \label{lem:accuracy_when_no_noise}
    Let $H = (V, E, W)$ and $u, v \in V$ be fixed, and suppose that the event in Proposition \ref{prop:shortest_path_structure} holds. Consider a variant of the main algorithm where we create a multigraph $\hat{H}$ (instead of $\tilde{H}$) such that we do not add Laplace noise to the red or blue edges, and do not make the green edges private.
    (In other words, the edge weights are entirely accurate). Then, the absolute difference between the length of the true shortest path in $H$, $d(u, v)$, and the length of the shortest path in the multigraph from $u$ to $v$ using at most $200 R \log n + \frac{100 T}{R}$ red edges, at most $1$ blue edge, and at most $\frac{100 T}{R}$ green edges, is at most $400 R \log n \cdot A$.
\end{lemma}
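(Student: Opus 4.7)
My plan is to apply Proposition \ref{prop:shortest_path_structure} and, in each of its two cases, exhibit an explicit $u$-$v$ path in $\hat{H}$ respecting the stated edge budget whose length is at most $d(u,v) + 400 R \log n \cdot A$. The matching lower bound $d_{\hat{H}}(u,v) \ge d(u,v)$ is immediate from the observation that each red, blue, or green edge of $\hat{H}$ corresponds to an actual walk of the same weight in $H$ (red edges are edges of $H$; a blue edge $(s_1,s_2)$ has weight $d(s_1,s_2)$, the length of a true $s_1$-$s_2$ path; a green edge inside $B^{(t)}$ has weight equal to a true shortest path in $H[B^{(t)}]$).

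Let $\mathcal{P} = (p_0, \dots, p_Q)$ be the fixed shortest $u$-$v$ path in $H$. In Case 1 of Proposition \ref{prop:shortest_path_structure}, every $p_i$ lies in some peeled ball, and the number of indices $i < Q$ with $p_i, p_{i+1}$ in different balls is at most $40T/R$, so $\mathcal{P}$ decomposes into at most $40T/R + 1$ maximal runs of consecutive vertices sharing a common ball. I replace each run $(p_a, \dots, p_b)$ by a single green edge $(p_a, p_b)$ whose weight is the shortest $p_a$-$p_b$ path inside $B^{(t)}$, hence at most the length of $\mathcal{P}$ on the run; and I connect adjacent runs by the corresponding red edge $(p_i, p_{i+1}) \in E$, which carries its true weight. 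This uses at most $100T/R$ green edges, at most $100T/R$ red edges, no blue edges, and has total length at most the length of $\mathcal{P}$, namely $d(u,v)$.

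In Case 2, let $s_1, s_2 \in S$ satisfy $h(p_i, s_1), h(p_j, s_2) \le 100 R \log n$. I concatenate five subpaths in $\hat{H}$: a Case-1-style path from $u$ to $p_i$ through the peeled balls; a hop-shortest red-edge path of at most $100 R \log n$ hops from $p_i$ to $s_1$; the single blue edge $(s_1, s_2)$; a hop-shortest red-edge path of at most $100 R \log n$ hops from $s_2$ to $p_j$; and a Case-1-style path from $p_j$ to $v$. The edge totals are at most $200 R \log n + 80T/R$ red, $1$ blue, and $80T/R + 2$ green, all within the stated budget. The two Case-1-style portions contribute exactly $d(u, p_i)$ and $d(p_j, v)$ (subpaths of shortest paths are shortest). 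Each of the two red-edge segments to/from the hitting set contributes at most $100 R \log n \cdot A$ since every edge weight is at most $A$. The blue edge contributes $d(s_1, s_2) \le d(s_1, p_i) + d(p_i, p_j) + d(p_j, s_2) \le 200 R \log n \cdot A + d(p_i, p_j)$ by the triangle inequality and the same per-edge weight bound. Summing telescopes to $d(u,p_i) + d(p_i, p_j) + d(p_j, v) + 400 R \log n \cdot A = d(u,v) + 400 R \log n \cdot A$.

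The main obstacle I anticipate is the bookkeeping: I must check that the red-edge budget $200 R \log n + 100T/R$ absorbs both the at most $80T/R$ run-transitions from the two Case-1-style portions and the at most $200 R \log n$ hops used to reach and leave $S$, and I must handle the degenerate configurations (for instance $i = 0$, $j = Q$, or $i = j$) where some of the five pieces collapse so that the edge and length accounting does not double-count. Beyond that, the argument invokes only the triangle inequality, the trivial per-edge weight bound $w_e \le A$, the fact that green edges equal in-ball shortest distances, and the subpath-of-shortest-path property.
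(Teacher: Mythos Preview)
Your proposal is correct and follows essentially the same approach as the paper: the lower bound comes from observing that every colored edge in $\hat{H}$ dominates the corresponding true distance in $H$, and the upper bound is obtained by explicitly constructing, in each of the two cases of Proposition~\ref{prop:shortest_path_structure}, a $u$--$v$ path in $\hat{H}$ obeying the edge budget and of length at most $d(u,v) + 400 R \log n \cdot A$. One small imprecision: in Case~2 the two Case-1-style portions contribute \emph{at most} $d(u,p_i)$ and $d(p_j,v)$, not exactly (a green edge inside a ball may be shorter than the corresponding subpath of $\mathcal{P}$), but this only strengthens the bound.
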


\begin{proof}
    Since all edges are now fully accurate, every edge weight in $\hat{H}$ from some vertex $v_1$ to some $v_2$ is at least  $d(v_1, v_2)$. So, the shortest path in $\hat{H}$ from $u$ to $v$ is at least $d(u, v)$. Let $\mathcal{P} = (p_0 = u, p_1, \dots, p_Q = v)$ be the shortest path from $u$ to $v$ in $H$.
    
    First, suppose that every vertex on $\mathcal{P}$ is in some removed ball and the number of $i$ such that $p_i, p_{i+1}$ are in different balls is at most $\frac{40 T}{R}.$ So, the set $\{0, 1, \dots, Q\}$ can be partitioned into $g \le \frac{40 T}{R}$ intervals $\{a_1 = 0, \dots, b_1\}, \{a_2 = b_1+1, \dots, b_2\}, \dots, \{a_g, \dots, b_g = Q\},$ where the vertices corresponding to each interval are all in the same ball $B^{(t)}$. This means that for each $1 \le i \le g$, there is a green edge from $p_{a_i}$ to $p_{b_i}$ with weight at most the sum of the weights on the path $\mathcal{P}$ from $p_{a_i}$ to $p_{b_i}$. This is true because this piece of the path is entirely contained in the induced subgraph $H[B^{(t)}]$. So, by connecting $p_{a_i}$ to $p_{b_i}$ with a green edge and connecting $p_{b_i}$ to $p_{a_{i+1}} = p_{b_i+1}$ with a red edge, we can connect $p_0 = u$ to $p_Q = v$ using at most $\frac{40 T}{R}$ green and $\frac{40 T}{R}$ red edges. In addition, this path is at most the length of the shortest path in $H$ from $u$ to $v$ (and in fact must be equal). See the top of Figure \ref{fig:main_fig} for a diagram representation of this case.
    
    Alternatively, because the event in Proposition \ref{prop:shortest_path_structure} holds, there exist $i \le j$ such that $p_{i'}$ is in some removed ball for all $i' < i$, and the number of $i' < i$ such that $p_{i'}, p_{i'+1}$ are in different balls is at most $\frac{40 T}{R}.$ Likewise, $p_{j'}$ is in some removed ball for all $j' > j$, and the number of $j' > j$ such that $p_{j'}, p_{j'-1}$ are in different balls is at most $\frac{40 T}{R}.$
    So, by the same argument as the paragraph above, there exists a path in $\hat{H}$ from $u$ to $p_{i}$ using at most $\frac{40 T}{R}$ green and $\frac{40 T}{R}$ red edges, which has total length at most $d(u, p_i)$, and a path in $\hat{H}$ from $p_j$ to $v$ using at most $\frac{40 T}{R}$ green and $\frac{40 T}{R}$ red edges, which has total length at most $d(p_j, v)$.
    In addition, there is some $s_1 \in S$ such that $h(p_i, s_1) \le 100 R \log n$, so we can connect $p_i$ and $s_1$ using at most $100 R \log n$ red edges. Likewise, there is some $s_2 \in S$ such that $p_j$ and $s_2$ can be connected using at most $100 R \log n$ edges.
    
    So, we can consider the path using at most $\frac{40 T}{R}$ red and $\frac{40 T}{R}$ green edges from $u$ to $p_i$, then using at most $100 R \log n$ red edges from $p_i$ to $s_1$, then one blue edge from $s_1$ to $s_2$ with weight exactly $d(s_1, s_2)$ (since $s_1, s_2 \in S$), then at most $100 R \log n$ red edges from $s_2$ to $p_j$, and finally at most $\frac{40 T}{R}$ red and $\frac{40 T}{R}$ green edges from $p_j$ to $v$. See the bottom of Figure \ref{fig:main_fig} for a diagram representation of this case.
    
    Overall, this path has at most $200 R \log n + \frac{100 T}{R}$ red edges, at most $1$ blue edge, and at most $\frac{100 T}{R}$ green edges. In addition, the total length of this path in $H$ is at most
\begin{align*}
    &\hspace{0.5cm} d(u, p_i) + 100 R \log n \cdot A + d(s_1, s_2) + 100 R \log n \cdot A + d(p_j, v) \\
    &= d(u, p_i) + d(s_1, s_2) + d(p_j, v) + 200 R \log n \cdot A \\
    &\le d(u, p_i) + d(s_1, p_i) + d(p_i, p_j) + d(p_j, s_2) + d(p_j, v) + 200 R \log n \cdot A \\
    &\le d(u, p_i) + d(p_i, p_j) + d(p_j, v) + 200 R \log n \cdot A + 200 R \log n \cdot A \\
    &= d(u, v) + 400 R \log n \cdot A.
\end{align*}
    Above, we use the fact that every red edge has weight at most $A$, which also means the distance from $p_i$ to $s_1$ and from $p_j$ to $s_2$ are at most $100 R \log n \cdot A$, and the fact  that the distance from $s_1$ to $s_2$ is upper bounded by the length of the three shortest paths that go from $s_1$ to $p_i$ to $p_j$ to $s_2$.
\end{proof}

\begin{figure}[tp]
    \centering
\begin{tikzpicture}[scale=1.5]
    \node (u) at (0,0) {$u$};
    \node (v) at (8.7,0) {$v$};
    
    \draw[black,dashed] (u) -- (v);
    
    \draw[green, thick, bend left] (0,0) edge (1.2, 0);
    \draw[red, thick, bend left] (1.2,0) edge (1.5, 0);
    \draw[green, thick, bend left] (1.5,0) edge (2.4, 0);
    \draw[red, thick, bend left] (2.4,0) edge (2.7, 0);
    \draw[green, thick, bend left] (2.7,0) edge (4.2, 0);
    \draw[red, thick, bend left] (4.2,0) edge (4.5, 0);
    \draw[green, thick, bend left] (4.5,0) edge (6.2, 0);
    \draw[red, thick, bend left] (6.2,0) edge (6.5, 0);
    \draw[green, thick, bend left] (6.5,0) edge (7.3, 0);
    \draw[red, thick, bend left] (7.3,0) edge (7.6, 0);
    \draw[green, thick, bend left] (7.6,0) edge (8.7, 0);
\end{tikzpicture}

\hspace{0.5cm}

\begin{tikzpicture}[scale=1.5]
    \node (u) at (0,0) {$u$};
    \node (v) at (8.7,0) {$v$};
    \node (pi) at (2.7,0) {$p_i$};
    \node (pj) at (7.3,0) {$p_j$};
    \node (s1) at (2.7, 3) {$s_1$};
    \node (s2) at (7.3, 3) {$s_2$};
    
    \draw[black,dashed] (u) -- (pi);
    \draw[black,dashed] (pi) -- (pj);
    \draw[black,dashed] (pj) -- (v);
    
    \draw[green, thick, bend left] (0,0) edge (1.2, 0);
    \draw[red, thick, bend left] (1.2,0) edge (1.5, 0);
    \draw[green, thick, bend left] (1.5,0) edge (2.4, 0);
    \draw[red, thick, bend left] (2.4,0) edge (2.6,0.04);

    \draw[red, thick, bend left] (7.4,0.04) edge (7.6, 0);
    \draw[green, thick, bend left] (7.6,0) edge (8.7, 0);
    
    \draw[red, thick, dashed] (pi) edge (s1);
    \draw[red, thick, dashed] (pj) edge (s2);
    \draw[blue, thick] (s1) edge (s2);
\end{tikzpicture}
    \caption{Representation of the two cases of Lemma \ref{lem:accuracy_when_no_noise}. The top represents when every point on the shortest path from $u$ to $v$ in $H$ has been peeled and the number of times the ball $B^{(t)}$ changes along this path is at most $\frac{40 T}{R}.$ In this case, we can directly use green and red edges. The bottom represents the remaining cases, in which we can find some $p_i$ that is ``close'' to $u$ on this path, and some $p_j$ that is ``close'' to $v$ on this path, such that each of $p_i, p_j$ are ``close'' to $S$. In this case, we alternate green and red edges between $u$ and $p_i$ and between $p_j$ and $v$, use only red edges from $p_i$ to $s_1$ and from $s_2$ to $p_j$, and use a single blue edge from $s_1$ to $s_2$.}
    \label{fig:main_fig}
\end{figure}
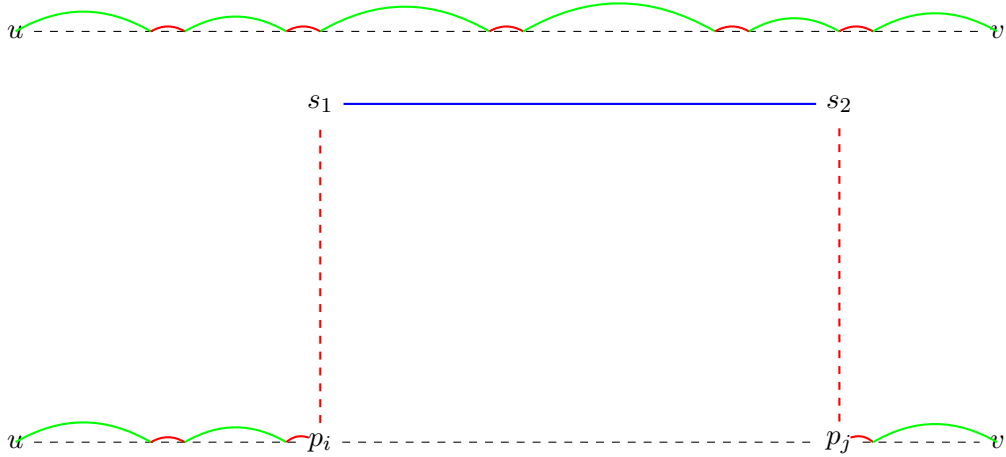

We now focus on the $(\eps, \delta)$-DP case. At the end of this section, we show how to modify the results to also get an algorithm for the pure $(\eps, 0)$-DP case.

\begin{theorem} \label{thm:result_up_to_recursion}
    Let $H = (V, E, W)$ and $u, v \in V$ be fixed. Suppose that the events in Propositions \ref{prop:red_accuracy}, \ref{prop:blue_accuracy}, \ref{prop:green_accuracy}, and \ref{prop:shortest_path_structure} all hold. Also, suppose that $f(n, A, \eps, \delta)$ has at most linear dependence on $\eps^{-1}$ and logarithmic dependence on $\delta^{-1}$, meaning that $f(n, A, \eps/N_1, \delta/N_2) \le N_1 \cdot \log N_2 \cdot f(n, A, \eps, \delta)$ for any $N_1 \ge 1$, $N_2 \ge 10$. Then, our estimate $\tilde{D}^{(k)}[u, v]$ is accurate up to additive error
\[
(\log n)^{O(1)} \cdot \max\left(\frac{\sqrt{\log \delta^{-1}}}{\eps} \cdot \max\left(R, \frac{T}{R}, \frac{n}{T}\right), \frac{T}{R} \cdot f(T, A, \eps, \delta), A \cdot R\right).
\]
\end{theorem}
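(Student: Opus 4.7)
The plan is to sandwich $\tilde{D}^{(k)}[u,v]$ between matching upper and lower bounds, each of which differs from the true distance $d(u,v)$ by at most the claimed quantity. The per-edge error bounds I will rely on are: by Proposition~\ref{prop:red_accuracy}, each red edge is off by $O(\log n / \eps)$; by Proposition~\ref{prop:blue_accuracy}, each blue edge is off by $O(\log^2 n \cdot (n/T) \sqrt{\log \delta^{-1}}/\eps)$ (using $L = 100(\log n) n/T$); and by Proposition~\ref{prop:green_accuracy} together with the stated mildness of $f$ in $\eps^{-1}$ and $\log \delta^{-1}$, each green edge is off by at most $(\log n)^{O(1)} f(T, A, \eps, \delta)$. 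The edge-count caps built into the definition of $\tilde{D}^{(k)}[u,v]$ are $O(R \log n + T/R)$ red edges, at most one blue edge, and $O(T/R)$ green edges.

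For the upper bound, Proposition~\ref{prop:shortest_path_structure} supplies the hypothesis of Lemma~\ref{lem:accuracy_when_no_noise}, which produces a path $\mathcal{P}^{*}$ from $u$ to $v$ in the noiseless multigraph $\hat{H}$ that respects all color caps and whose $\hat{H}$-length exceeds $d(u,v)$ by at most $400 (R \log n) A$. The noisy length of $\mathcal{P}^{*}$ in $\tilde{H}$ exceeds its $\hat{H}$-length by at most the sum of the per-edge errors over $\mathcal{P}^{*}$, so, since $\tilde{D}^{(k)}[u,v]$ is defined as a minimum over feasible paths, it is at most the noisy length of $\mathcal{P}^{*}$. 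For the lower bound, I will observe that every edge in $\tilde{H}$, once we strip the added noise (for red/blue) or the recursive approximation error (for green), corresponds to a true path in $H$: red edge weights are literal edges of $H$, blue edge weights are $d_{H}(s_1, s_2)$, and green edge weights are $d_{H[B^{(t)}]}(u,v) \ge d_{H}(u,v)$. Hence every feasible path in $\tilde{H}$ from $u$ to $v$ corresponds to a walk in $H$, whose true length is at least $d(u,v)$; thus $\tilde{D}^{(k)}[u,v]$ is at least $d(u,v)$ minus the maximum total noise over any feasible path, which is bounded by the same quantity as in the upper bound.

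Summing the three edge-count-times-per-edge-error contributions and the $O((R \log n) A)$ slack from Lemma~\ref{lem:accuracy_when_no_noise}, the total error will be at most a polylogarithmic factor times $\frac{\sqrt{\log \delta^{-1}}}{\eps} \cdot (R + T/R + n/T) + (T/R)\, f(T,A,\eps,\delta) + A R$, and bounding the sum inside the first term by three times the maximum yields the stated bound. The main obstacle, and essentially the only non-bookkeeping point, is ensuring that the recursive accuracy guarantee from Proposition~\ref{prop:green_accuracy} with the sharpened privacy parameters $(\eps/(3K^{2}), \delta/(3K^{2}))$ does not inflate $f$ by more than a $(\log n)^{O(1)}$ factor, which is precisely what the hypothesis on the dependence of $f$ on $\eps^{-1}$ and $\log \delta^{-1}$ provides.
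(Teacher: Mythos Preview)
Your proposal is correct and follows essentially the same approach as the paper's proof: invoke Lemma~\ref{lem:accuracy_when_no_noise} for the baseline $O(AR\log n)$ error, then bound the additional deviation on any feasible path by (edge count)~$\times$~(per-edge noise) for each color, use the hypothesis on $f$ to absorb the $3K^{2}$ blow-up in the recursive green-edge call into a polylog factor, and finally replace the sum by a maximum. Your explicit separation into an upper bound (via the specific path $\mathcal{P}^{*}$) and a lower bound (via the observation that noiseless edge weights dominate true $H$-distances) is slightly more detailed than the paper's one-line statement that ``every path \dots\ does not change in total weight by more than'' the computed sum, but the argument is the same.
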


\begin{proof}
    By Lemma \ref{lem:accuracy_when_no_noise}, we have that if we did not add Laplace errors to the red edges or blue edges, and returned exact distances in the balls for the green edges, our output would be accurate up to error $400 R \log n \cdot A$. Now, by Propositions \ref{prop:red_accuracy}, \ref{prop:blue_accuracy}, and \ref{prop:green_accuracy}, we have that \emph{every} red edge is accurate up to additive error $\frac{10 K \log n}{\eps}$, \emph{every} blue edge is accurate up to error $\frac{30 K L \sqrt{\log (3K/\delta)} \log n}{\eps},$ and \emph{every} green edge is accurate up to error $f(T, A, \frac{\eps}{3K^2}, \frac{\delta}{3K^2})$. Hence, every path that uses at most $100 R \log n + \frac{100 T}{R}$ red edges, at most $1$ blue edge, and $\frac{100 T}{R}$ green edges does not change in total weight by more than
\begin{align*}
    &\hspace{0.5cm} \frac{10 K \log n}{\eps} \cdot \left(100 R \log n + \frac{100 T}{R}\right) + \frac{30 K L \sqrt{\log \frac{3K}{\delta}} \log n}{\eps} \cdot 1 + f(T, A, \frac{\eps}{3K^2}, \frac{\delta}{3K^2}) \cdot \frac{100 T}{R} \\
    &\le (\log n)^{O(1)} \cdot \frac{\sqrt{\log \delta^{-1}}}{\eps} \left(R + \frac{T}{R} + \frac{n}{T}\right) + f(T, A, \frac{\eps}{3K^2}, \frac{\delta}{3K^2}) \cdot \frac{100 T}{R} \\
    &\le (\log n)^{O(1)} \cdot \left[\frac{\sqrt{\log \delta^{-1}}}{\eps} \left(R + \frac{T}{R} + \frac{n}{T}\right) + f(T, A, \frac{\eps}{3K^2}, \frac{\delta}{3K^2}) \cdot \frac{T}{R}\right],
\end{align*}
    where we recall that $L = 100 \log n \cdot \frac{n}{T}$ and $K = 100\log n$. Adding this to the $400 R \log n \cdot A$ error from Lemma \ref{lem:accuracy_when_no_noise}, we have that $\tilde{D}^{(k)}[u, v]$ is accurate up to error
\begin{align*}
    &\hspace{0.5cm} (\log n)^{O(1)} \cdot \left[\frac{\sqrt{\log \delta^{-1}}}{\eps} \left(R + \frac{T}{R} + \frac{n}{T}\right) + f(T, A, \frac{\eps}{3K^2}, \frac{\delta}{3K^2}) \cdot \frac{T}{R} + A \cdot R\right] \\
    &\le (\log n)^{O(1)} \cdot \left[\frac{\sqrt{\log \delta^{-1}}}{\eps} \left(R + \frac{T}{R} + \frac{n}{T}\right) + f(T, A, \eps, \delta) \cdot \frac{T}{R} + A \cdot R\right] \\
    &\le (\log n)^{O(1)} \cdot \max\left(\frac{\sqrt{\log \delta^{-1}}}{\eps} \max\left(R, \frac{T}{R}, \frac{n}{T}\right), f(T, A, \eps, \delta) \cdot \frac{T}{R}, A \cdot R\right).
\end{align*}
    The penultimate inequality follows since $K = O(\log n)$ and since $f$ has at most linear dependence on $\eps^{-1}$ and logarithmic dependence on $\delta^{-1}$. The final inequality follows since the sum and max of $O(1)$ positive terms are asymptotically equivalent.
\end{proof}

Now, define $\alpha = \sqrt{2}-1$. Note that $\alpha$ is a root of $1-2\alpha-\alpha^2 = 0$. Suppose that the parameters $A, \eps, \delta$ are fixed, and that $A \ge \eps^{-1}$.  Now, let us consider the recursive function
\[f_0(n) = \max\left(\frac{\sqrt{\log \delta^{-1}}}{\eps} \cdot \max\left(R, \frac{T}{R}, \frac{n}{T}\right), \frac{T}{R} \cdot f_0(T), A \cdot R\right),\]
where $R = \frac{A^{\alpha-1}n^\alpha}{\eps^{1-\alpha}} = \frac{n}{(A \eps n)^{1-\alpha}}$ and $T = \frac{A^{-\alpha}n^{1-\alpha}}{\eps^{\alpha}} = \frac{n}{(A \eps n)^\alpha}$.
(Note that this is indeed recursive since $T = \frac{n}{(A \eps n)^\alpha} \le \frac{n}{n^\alpha} \le n^{2-\sqrt{2}}$.)
We will show that $f_0(n) \le \sqrt{\log \delta^{-1}} \cdot \frac{A^{\alpha} n^{\alpha}}{\eps^{1-\alpha}}.$

First, note that 
\begin{equation} \label{eq:final_1}
    \frac{\sqrt{\log \delta^{-1}}}{\eps} \cdot R = \sqrt{\log \delta^{-1}} \cdot \eps^{-1} \cdot \frac{A^{\alpha-1} n^\alpha}{\eps^{1-\alpha}} \le \sqrt{\log \delta^{-1}} \cdot A \cdot \frac{A^{\alpha-1} n^\alpha}{\eps^{1-\alpha}} = \sqrt{\log \delta^{-1}} \cdot \frac{A^\alpha n^\alpha}{\eps^{1-\alpha}}.
\end{equation}
Next, note that 
\begin{equation} \label{eq:final_2}
    \frac{\sqrt{\log \delta^{-1}}}{\eps} \cdot \frac{T}{R} = \sqrt{\log \delta^{-1}} \cdot \eps^{-1} \cdot (A \eps n)^{1-2 \alpha} \le \sqrt{\log \delta^{-1}} \cdot \eps^{-1} \cdot (A \eps n)^{\alpha} = \sqrt{\log \delta^{-1}} \cdot \frac{A^\alpha n^\alpha}{\eps^{1-\alpha}}.
\end{equation}
Next, note that
\begin{equation} \label{eq:final_3}
    \frac{\sqrt{\log \delta^{-1}}}{\eps} \cdot \frac{n}{T} = \frac{\sqrt{\log \delta^{-1}}}{\eps} \cdot (A \eps n)^\alpha = \sqrt{\log \delta^{-1}} \cdot \frac{A^\alpha n^\alpha}{\eps^{1-\alpha}}.
\end{equation}
Next, note that by inductively applying $f_0(T) \le \sqrt{\log \delta^{-1}} \cdot \frac{A^\alpha T^\alpha}{\eps^{1-\alpha}},$ we have
\begin{multline} \label{eq:final_4}
    \frac{T}{R} \cdot f_0(T) \le (A \eps n)^{1-2 \alpha} \cdot \sqrt{\log \delta^{-1}} \cdot \frac{A^\alpha T^\alpha}{\eps^{1-\alpha}} = \sqrt{\log \delta^{-1}} \cdot (A \eps n)^{1-2 \alpha} \cdot \frac{A^\alpha}{\eps^{1-\alpha}} \cdot \left(\frac{n}{(A \eps n)^\alpha}\right)^\alpha \\
    = \sqrt{\log \delta^{-1}} \cdot (A \eps n)^{1-2 \alpha - \alpha^2} \cdot \frac{A^\alpha}{\eps^{1-\alpha}} \cdot n^\alpha = \sqrt{\log \delta^{-1}} \cdot \frac{A^\alpha n^\alpha}{\eps^{1-\alpha}}.
\end{multline}
    Finally, we have that
\begin{equation} \label{eq:final_5}
    A \cdot R = A \cdot \frac{A^{\alpha-1}n^\alpha}{\eps^{1-\alpha}} = \frac{A^\alpha n^\alpha}{\eps^{1-\alpha}}.
\end{equation}
    So, combining Equations \eqref{eq:final_1}, \eqref{eq:final_2}, \eqref{eq:final_3}, \eqref{eq:final_4}, and \eqref{eq:final_5}, we have that $f_0(n) \le \sqrt{\log \delta^{-1}} \cdot \frac{A^{\alpha} n^{\alpha}}{\eps^{1-\alpha}}.$
    
    In reality, we wish for the recursion
\begin{equation} \label{eq:recursion_approx_dp}
f(n) \le (\log n)^C \cdot \max\left(\frac{\sqrt{\log \delta^{-1}}}{\eps} \cdot \max\left(R, \frac{T}{R}, \frac{n}{T}\right), \frac{T}{R} \cdot f(T), A \cdot R\right)
\end{equation}
    for some constant $C$.
    However, note if a function $g(n)$ satisfies the recursion $g(n) \le g(T) \cdot (\log n)^C$ for $T = \frac{n}{(A \eps n)^\alpha} \le n^{2-\sqrt{2}}$, then $g(n) \le (\log n)^{10 C \log \log n}$, since the number of times we call the recursion is at most $10 \log \log n$ until $T$ becomes at most $2$. So, by letting $f(n) = f_0(n) \cdot g(n)$, we have that $f(n)$ satisfies the desired recursion. Hence, we have that the error of our algorithm is at most
\[f_0(n) \cdot (\log n)^{10 C \log \log n} \le \sqrt{\log \delta^{-1}} \cdot \frac{A^\alpha n^{\alpha+o(1)}}{\eps^{1-\alpha}},\]
    where $\alpha = \sqrt{2}-1$. This was all assuming that $A \ge \eps^{-1}$, so if $A < \eps^{-1}$, we can instead replace $A$ with $\eps^{-1}$ to get the bound $\sqrt{\log \delta^{-1}} \cdot \frac{\eps^{-\alpha} n^{\alpha+o(1)}}{\eps^{1-\alpha}} = \sqrt{\log \delta^{-1}} \cdot \eps^{-1} n^{\alpha+o(1)}.$ Hence, we have the following theorem.
    
\begin{theorem}[Approximate-DP Setting] \label{thm:main_bounded_approx}
    With probability at least $\frac{2}{3}$, the estimate Algorithm \ref{alg:smallweights} estimates the true shortest path $d(u, v)$ for every $u, v \in V$ up to additive error
\begin{equation} \label{eq:final_approx}
    \sqrt{\log \delta^{-1}} \cdot \frac{A^\alpha n^{\alpha+o(1)}}{\eps^{1-\alpha}} + \sqrt{\log \delta^{-1}} \cdot \eps^{-1} n^{\alpha+o(1)},
\end{equation}
    where $\alpha = \sqrt{2}-1 \le 0.4143$.
\end{theorem}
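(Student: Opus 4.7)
The plan is to combine Theorem~\ref{thm:result_up_to_recursion} with a careful choice of the parameters $R$ and $T$ that balances the four terms in the recursion, then solve the recursion. First, I would check that, within a single outer iteration $k$, the four events of Propositions~\ref{prop:red_accuracy}, \ref{prop:blue_accuracy}, \ref{prop:green_accuracy}, and \ref{prop:shortest_path_structure} hold simultaneously with probability at least $2/3$ by a union bound over their individual failure probabilities (roughly $0.01, 0.01, 0.02, 0.25$). Thus, for any fixed pair $(u,v)$, a single estimate $\tilde{D}^{(k)}[u,v]$ satisfies the error bound of Theorem~\ref{thm:result_up_to_recursion} with constant probability; taking the entrywise median over $K = 100 \log n$ independent iterations and applying a Hoeffding bound boosts the per-pair success probability to $1 - n^{-\Omega(1)}$, after which a union bound over all $O(n^2)$ pairs yields the claimed overall success.

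Next, I would conjecture the closed form $f(n) \le \sqrt{\log \delta^{-1}} \cdot A^\alpha n^\alpha / \eps^{1-\alpha}$ with $\alpha = \sqrt{2}-1$, motivated by the identity $1 - 2\alpha - \alpha^2 = 0$ which makes the recursion self-consistent. Setting $R = n/(A\eps n)^{1-\alpha}$ and $T = n/(A\eps n)^\alpha$, I would directly verify that each of the four quantities $A \cdot R$, $(\sqrt{\log\delta^{-1}}/\eps)\cdot R$, $(\sqrt{\log\delta^{-1}}/\eps)\cdot (T/R)$, and $(\sqrt{\log\delta^{-1}}/\eps)\cdot (n/T)$ equals the conjectured bound (this step uses $A \ge \eps^{-1}$ to swap a stray $\eps^{-1}$ factor for an $A$). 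For the recursive contribution $(T/R) \cdot f(T)$, the inductive hypothesis and the defining identity $\alpha^2 + 2\alpha = 1$ cause the $(A\eps n)$-factors to cancel exactly, returning the same target expression and closing the induction.

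The main obstacle will be absorbing the polylogarithmic slack per recursive level. Each application of Theorem~\ref{thm:result_up_to_recursion} introduces a $(\log n)^{O(1)}$ factor and invokes $f$ with privacy parameters rescaled by $K^2 = \Theta(\log^2 n)$; however, since $T \le n^{2-\sqrt{2}} < n^{0.59}$, the recursion bottoms out in $O(\log \log n)$ levels, so the cumulative overhead is $(\log n)^{O(\log \log n)} = n^{o(1)}$ and is absorbed into the $n^{\alpha + o(1)}$ exponent. The remaining case $A < \eps^{-1}$ is handled by rerunning the analysis with $A$ replaced by $\eps^{-1}$ throughout, which produces the second summand $\sqrt{\log \delta^{-1}} \cdot \eps^{-1} n^{\alpha + o(1)}$ in \eqref{eq:final_approx}.
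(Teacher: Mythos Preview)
Your proposal is correct and follows essentially the same approach as the paper: the same parameter choices $R = n/(A\eps n)^{1-\alpha}$, $T = n/(A\eps n)^{\alpha}$ with $\alpha=\sqrt{2}-1$, the same term-by-term verification of the recursion (using $A\ge \eps^{-1}$ and the identity $1-2\alpha-\alpha^2=0$), the same $O(\log\log n)$-depth argument to absorb the polylogarithmic factors into $n^{o(1)}$, and the same median-plus-union-bound amplification. The only cosmetic discrepancy is that some of the four ``non-recursive'' terms are only $\le$ rather than $=$ the target bound, and you should explicitly note that your candidate $f$ has at most linear dependence on $\eps^{-1}$ and logarithmic dependence on $\delta^{-1}$ so that Theorem~\ref{thm:result_up_to_recursion} applies.
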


\begin{proof}
    Let $f(n, A, \eps, \delta)$ be the expression in \eqref{eq:final_approx} for some appropriate choice of $n^{o(1)}$ 
    such that it satisfies the recursion in \eqref{eq:recursion_approx_dp}. We proceed by strong induction, and prove that with probability at least $2/3$, the maximum error is at most $f(n, A, \eps, \delta)$.
    
    We start by focusing on a single pair $(u, v) \in V$ and a single iteration of the outer loop, which outputs some estimate $\mathcal{D}^{(k)}$.
    Note that $f$ has at most linear dependence on $\eps^{-1}$ and at most logarithmic dependence on $\delta^{-1}$, so we can apply Theorem \ref{thm:result_up_to_recursion}. This requires Propositions \ref{prop:red_accuracy}, \ref{prop:blue_accuracy}, \ref{prop:green_accuracy} (which can be applied by our induction hypothesis as $T \le n^{2-\sqrt{2}}$), and \ref{prop:shortest_path_structure} to all hold, which by a union bound holds with probability at least $0.71 \ge \frac{2}{3}$. Since $f(n, A, \eps, \delta)$ satisfies the recursion of Equation \eqref{eq:recursion_approx_dp}, by Theorem \ref{thm:result_up_to_recursion}, the probability that $\tilde{D}^{(k)}[u, v]$ accurately estimates the true distance $d(u, v)$ up to error $f(n, A, \eps, \delta)$ is at least $2/3$.
    
    Hence, because our final estimate takes the entrywise median of $K = 100 \log n$ copies of $\tilde{D}^{(k)}$, a simple Hoeffding bound implies that for all $u, v \in V$, $\BP\left(|\tilde{D}[u, v]-d(u, v)| \ge f(n, A, \eps, \delta)\right) \le \frac{1}{n^3}$, so this holds simultaneously for all $u, v \in V$ with probability at least $1-\frac{1}{n} \ge \frac{2}{3}$.
\end{proof}

We now focus on the pure-DP setting, where we wish for $(\eps, 0)$-DP. In this case, Theorem \ref{thm:result_up_to_recursion} still holds, except that the blue edges are now accurate up to error $\frac{30 K L^2 \log n}{\eps}$ instead of $\frac{30 K L \sqrt{\log \delta^{-1}} \log n}{\eps}$. Using this, we obtain that now the estimate $\tilde{D}^{(k)}[u, v]$ is accurate up to error
\[(\log n)^C \cdot \max\left(\frac{1}{\eps} \cdot \max\left(R, \frac{T}{R}, \frac{n^2}{T^2}\right), \frac{T}{R} \cdot f(T, A, \eps), A \cdot R\right),\]
assuming that on graphs with $T$ nodes, the algorithm is accurate up to error $f(T, A, \eps)$.

Again, we assume first that $\eps^{-1} \le A$ and consider the simpler recursion
\[f_0(n) \le \max\left(\frac{1}{\eps} \cdot \max\left(R, \frac{T}{R}, \frac{n^2}{T^2}\right), \frac{T}{R} \cdot f_0(T), A \cdot R\right).\]
Indeed, this recursion is satisfied by the function $f_0(n) = (A n)^\alpha \cdot \eps^{-(1-\alpha)}$ for $\alpha = \frac{\sqrt{17}-3}{2}$, when $T = n^{1-\alpha/2} A^{-\alpha/2} \eps^{-\alpha/2} = \frac{n}{(A \eps n)^{\alpha/2}}$ and $R = n^\alpha A^{\alpha-1} \eps^{\alpha-1} = \frac{n}{(A \eps n)^{1-\alpha}}$. Note that $\alpha$ is a root of $1-\frac{3 \alpha}{2} - \frac{\alpha^2}{2} = 0.$ Again, we have that $A \eps \ge 1$, so $T \le n^{1-\alpha / 2}$; hence, this function is in fact recursive.

To verify the recursion, first note that $\frac{R}{\eps} \le A \cdot R$ so we can ignore the $\frac{1}{\eps} \cdot R$ term. For the rest of the terms, note that
\begin{equation} \label{eq:final_pure_1}
    \frac{1}{\eps} \cdot \frac{T}{R} = \frac{1}{\eps} \cdot (A \eps n)^{1 - 3\alpha/2} \le \frac{1}{\eps} \cdot (A \eps n)^{\alpha} = (An)^\alpha \eps^{-(1-\alpha)},
\end{equation}
\begin{equation} \label{eq:final_pure_2}
    \frac{1}{\eps} \cdot \frac{n^2}{T^2} = \frac{1}{\eps} \cdot (A \eps n)^{\alpha} = (An)^\alpha \eps^{-(1-\alpha)},
\end{equation}
\begin{multline} \label{eq:final_pure_3}
    \frac{T}{R} \cdot f_0(T) = (A \eps n)^{1 - 3\alpha/2} \cdot (AT)^\alpha \eps^{-(1-\alpha)} = (A \eps n)^{1 - 3\alpha/2} \cdot A^\alpha \eps^{-(1-\alpha)} \cdot \left(\frac{n}{(A \eps n)^{\alpha/2}}\right)^\alpha \\
    = (A \eps n)^{1-3 \alpha/2 - \alpha^2/2} \cdot A^\alpha \eps^{-(1-\alpha)} n^\alpha = (An)^\alpha \eps^{-(1-\alpha)},
\end{multline}
and
\begin{equation} \label{eq:final_pure_4}
    A \cdot R = A \cdot n^\alpha A^{\alpha-1} \eps^{\alpha-1} = (An)^\alpha \eps^{-(1-\alpha)}.
\end{equation}
So, combining Equations \eqref{eq:final_pure_1}, \eqref{eq:final_pure_2}, \eqref{eq:final_pure_3}, and \eqref{eq:final_pure_4}, we have that $f_0(n) \le (An)^\alpha \eps^{-(\alpha-1)} = \frac{A^\alpha n^\alpha}{\eps^{1-\alpha}}$.

Hence, we may replace $f_0(n)$ with $f(n)$ and apply the same argument as in Theorem \ref{thm:main_bounded_approx} to obtain the following theorem.

\begin{theorem}[Pure-DP Setting] \label{thm:main_bounded_pure}
    With probability at least $\frac{2}{3}$, the estimate Algorithm \ref{alg:smallweights} estimates the true shortest path $d(u, v)$ for every $u, v \in V$ up to additive error
\begin{equation} \label{eq:final_pure}
    \frac{A^\alpha n^{\alpha+o(1)}}{\eps^{1-\alpha}} + \eps^{-1} n^{\alpha+o(1)},
\end{equation}
    where $\alpha = \frac{\sqrt{17}-3}{2} < 0.5616$.
\end{theorem}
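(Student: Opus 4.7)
The plan is to mirror the proof of Theorem \ref{thm:main_bounded_approx} almost verbatim, adjusting only the two places where the approximate-DP and pure-DP analyses diverge, and then verifying a new algebraic identity that pins down $\alpha = (\sqrt{17}-3)/2$. The only change inside the per-iteration error bound comes from Proposition \ref{prop:blue_accuracy}: in the pure-DP setting each blue edge carries error $\tilde{O}(KL^2/\eps)$ rather than $\tilde{O}(KL\sqrt{\log\delta^{-1}}/\eps)$. Since there is at most one blue edge on the reconstructed path and $L = \tilde{O}(n/T)$, this changes the ``hitting set'' contribution in Theorem \ref{thm:result_up_to_recursion} from $\tilde{O}((n/T)/\eps)$ to $\tilde{O}(n^2/(\eps T^2))$, yielding the recursion
\[
f(n) \le (\log n)^{C}\cdot\max\!\left(\tfrac{1}{\eps}\cdot\max\!\left(R,\tfrac{T}{R},\tfrac{n^{2}}{T^{2}}\right),\ \tfrac{T}{R}\cdot f(T,A,\eps),\ A\cdot R\right),
\]
which is exactly the expression displayed in the paragraph preceding the theorem.

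Next I would drop the $(\log n)^{C}$ prefactor and solve the simplified recursion $f_0(n) \le \max(\cdots)$ by the ansatz $f_0(n) = (An)^{\alpha}\eps^{-(1-\alpha)}$, choosing $T = n/(A\eps n)^{\alpha/2}$ and $R = n/(A\eps n)^{1-\alpha}$. The verification reduces to checking five inequalities (one per term in the max), each of which should equal $(An)^{\alpha}\eps^{-(1-\alpha)}$; the two nontrivial ones, namely $(n/T)^2/\eps$ and $(T/R)f_0(T)$, both produce exponents of $(A\eps n)$ equal to $1 - \tfrac{3\alpha}{2} - \tfrac{\alpha^{2}}{2}$, which forces the defining equation $1 - \tfrac{3\alpha}{2} - \tfrac{\alpha^{2}}{2} = 0$ and hence $\alpha = (\sqrt{17}-3)/2$. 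One also uses $A\eps \ge 1/n$ (actually $A\ge \eps^{-1}$) to bound $R/\eps \le A\cdot R$ and dispose of that term. The recursive call is valid since $T \le n^{1-\alpha/2} < n$.

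The standard polylog bookkeeping then absorbs the $(\log n)^{C}$ factor: each level of the recursion shrinks $n$ to $T \le n^{1-\alpha/2}$, so the recursion depth is $O(\log\log n)$ and the accumulated prefactor is at most $(\log n)^{O(\log\log n)} = n^{o(1)}$, giving the first summand $A^{\alpha}n^{\alpha+o(1)}\eps^{-(1-\alpha)}$ in \eqref{eq:final_pure}. The case $A<\eps^{-1}$ is handled by replacing $A$ with $\eps^{-1}$ throughout, producing the second summand $\eps^{-1}n^{\alpha+o(1)}$.

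Finally, the per-pair success-probability and the Hoeffding-over-$K$-iterations amplification are identical to the approximate-DP argument: Propositions \ref{prop:red_accuracy}, \ref{prop:blue_accuracy} (pure-DP version), \ref{prop:green_accuracy}, and \ref{prop:shortest_path_structure} together hold with probability $\ge 0.71$ at each outer iteration, and taking the entrywise median across $K = 100\log n$ independent copies boosts success to $1-1/n^{3}$ per pair, hence to $2/3$ simultaneously over all $u,v\in V$ by a union bound. The only genuinely new work beyond the $(\eps,\delta)$ case is the algebraic step: guessing the correct pair $(T,R)$ and confirming that the root of $1-\tfrac{3\alpha}{2}-\tfrac{\alpha^{2}}{2}=0$ balances all five terms; that step is routine once the $\frac{n^{2}}{T^{2}}$ correction is in place, so I expect no real obstacle beyond careful substitution.
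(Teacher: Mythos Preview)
Your proposal is correct and follows essentially the same route as the paper: identify that the sole change from the approximate-DP analysis is the blue-edge error term (replacing $n/T$ by $n^2/T^2$ in the recursion), plug in the ansatz $f_0(n)=(An)^{\alpha}\eps^{-(1-\alpha)}$ with $T=n/(A\eps n)^{\alpha/2}$ and $R=n/(A\eps n)^{1-\alpha}$, dispose of $R/\eps$ via $A\ge \eps^{-1}$, and then reuse the polylog bookkeeping and median-amplification from Theorem~\ref{thm:main_bounded_approx}. One small imprecision: it is only the recursive term $(T/R)f_0(T)$ that produces the exponent $1-\tfrac{3\alpha}{2}-\tfrac{\alpha^2}{2}$ of $(A\eps n)$ and hence forces the quadratic; the term $(n/T)^2/\eps$ equals $(An)^{\alpha}\eps^{-(1-\alpha)}$ directly once $T$ is chosen, so it fixes $T$ rather than $\alpha$.
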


To summarize, when $A$ and $\eps^{-1}, \log \delta^{-1}$ are sufficiently low (i.e., at most $n^{o(1)}$), we can solve all-pairs shortest path distances up to error $n^{\sqrt{2}-1 + o(1)} < O(n^{0.4143})$ with $(\eps, \delta)$-DP, and up to error $n^{(\sqrt{17}-3)/2 + o(1)} < O(n^{0.5616})$ with $(\eps, 0)$-DP.

\section*{Acknowledgments} \label{sec:ack}
The authors would like to thank Anders Aamand, Piotr Indyk, and Sandeep Silwal for helpful discussions regarding this problem.

\newcommand{\etalchar}[1]{$^{#1}$}

\end{document}